\newcommand{\iy}{\ensuremath{\infty}}
\newcommand{\R}{\ensuremath{\mathbf{R}}}
\newcommand{\N}{\ensuremath{\mathbf{N}}}
\renewcommand{\P}{\ensuremath{\mathbf{P}}}
\newcommand{\e}{\varepsilon}
\newcommand{\st}{\textnormal{ s.t. }}
\renewcommand{\leq}{\leqslant}
\renewcommand{\geq}{\geqslant}
\newcommand{\stochprec}{\leq_\textnormal{st}}
\newcommand{\notstochprec}{\nleqslant_\textnormal{st}}
\newcommand{\Mstochprec}{\leq^*_{\textnormal{st}}}
\newcommand{\Cstochprec}{\leq^{\textnormal{C}}_{\textnormal{st}}}
\DeclareMathOperator{\supp}{supp}
\DeclareMathOperator{\Prob}{P}
\newcommand{\el}{e_{\lambda}}
\newcommand{\PE}{\textnormal{P}_{\exp}(\R)}
\newtheorem{theo}{Theorem}[section]
\newtheorem{defn}[theo]{Definition}
\newtheorem{pr}[theo]{Proposition}
\newtheorem{lemma}[theo]{Lemma}
\newtheorem{rk}[theo]{Remark}
\newtheorem{cor}[theo]{Corollary}
\newtheorem{conj}[theo]{Conjecture}
\newtheorem{ex}[theo]{Example}
\newtheorem*{theo*}{Theorem}
\author{Guillaume AUBRUN and Ion NECHITA}
\thanks{Research was supported in part by the European Network Phenomena in High Dimensions, FP6
Marie Curie Actions, MCRN-511953}
\keywords{Stochastic domination, iterated convolutions, large deviations, majorization, catalysis}
\subjclass{Primary 60E15; Secondary 94A05}
\title{Stochastic domination for iterated convolutions and catalytic majorization}
\begin{document}

\begin{abstract}
We study how iterated convolutions of probability measures compare under stochastic domination. We give necessary and sufficient conditions for the existence of an integer $n$ such that $\mu^{*n}$ is stochastically dominated by $\nu^{*n}$ for two given probability measures $\mu$ and $\nu$. As a consequence we obtain a similar theorem on the majorization order for vectors in $\R^d$. In particular we prove results about catalysis in quantum information theory.
\medskip

\textbf{Domination stochastique pour les convolutions itérées et catalyse quantique}

\noindent
\textsc{Résumé.} Nous étudions comment les convolutions itérées des mesures de probabilités se comparent pour la domination stochastique. Nous donnons des conditions nécessaires et suffisantes pour l'existence d'un entier $n$ tel que $\mu^{*n}$ soit stochastiquement dominée par $\nu^{*n}$, étant données deux mesures de probabilités $\mu$ et $\nu$. Nous obtenons en corollaire un théorème similaire pour des vecteurs de $\R^d$ et la relation de Schur-domination. Plus spécifiquement, nous démontrons des résultats sur la catalyse en théorie quantique de l'information. 
\end{abstract}
\maketitle

\section*{Introduction and notations}
This work is a continuation of \cite{an}, where we study the phenomenon of catalytic majorization in quantum information theory. A probabilistic approach to this question involves stochastic domination which we introduce in Section \ref{sec:stoch_dom} and its behavior with respect to the convolution of measures. We give in Section \ref{sec:results} a condition on measures $\mu$ and $\nu$ for the existence of an integer $n$ such that $\mu^{*n}$ is stochastically dominated by $\nu^{*n}$. We gather further topological and geometrical aspects in Section \ref{sec:top_geom}. Finally, we apply these results to our original problem of catalytic majorization. In Section \ref{sec:quantum} we introduce the background for quantum catalytic majorization and we state our results. Section \ref{sec:proofs} contains the proofs and in Section \ref{sec:inf_cat} we consider an infinite dimensional version of catalysis.

We introduce now some notation and recall basic facts about probability measures. We write $\Prob(\R)$ for the set of probability measures on $\R$. We denote by $\delta_x$ the Dirac mass at point $x$. If $\mu \in \Prob(\R)$, we write $\supp \mu$ for the support of $\mu$. We write respectively $\min \mu \in [-\iy,+\iy)$ and $\max \mu \in (-\iy,+\iy]$ for $\min \supp \mu$ and $\max \supp \mu$. We also write $\mu(a,b)$ and $\mu[a,b]$ as a shortcut for $\mu((a,b))$ and $\mu([a,b])$. 
The convolution of two measures $\mu$ and $\nu$ is denoted $\mu * \nu$. Recall that if $X$ and $Y$ are independent random variables of respective laws $\mu$ and $\nu$, the law of $X+Y$ is given by $\mu * \nu$. The results of this paper are stated for convolutions of measures, they admit immediate translations in the language of sums of independent random variables. For $\lambda \in \R$, the function $\el$ is defined by $\el(x)=\exp(\lambda x)$. 

\section{Stochastic domination}\label{sec:stoch_dom}

A natural way of comparing two probability measures is given by the following relation
\begin{defn}
Let $\mu$ and $\nu$ be two probability measures on the real line. We say that $\mu$ is \emph{stochastically dominated} by $\nu$ and we write $\mu \stochprec \nu$ if
\begin{equation} \label{def-stochasticordering} \forall t \in \R, \mu[t, \iy) \leq \nu[t, \iy).\end{equation}
\end{defn}

Stochastic domination is an order relation on $\Prob(\R)$ (in particular, $\mu \stochprec \nu$ and $\nu \stochprec \mu$ imply $\mu=\nu$). The following result \cite{stoyan,gs} provides useful characterizations of stochastic domination.

\begin{theo*}
Let $\mu$ and $\nu$ be probability measures on the real line. The following are equivalent
\begin{enumerate}
 \item $\mu \stochprec \nu$.
 \item {\bf Sample path characterization.} There exists a probability space $(\Omega,\mathcal F,\P)$ and two random variables $X$ and $Y$ on $\Omega$ with respective
laws $\mu$ and $\nu$, so that
\[ \forall \omega \in \Omega, X(\omega) \leq Y(\omega) .\]
 \item {\bf Functional characterization.} For any increasing function $f : \R \to \R$ so that both integrals exist,
\[ \int f d\mu \leq \int f d\nu .\]
\end{enumerate}
 \end{theo*}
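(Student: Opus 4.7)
The plan is to prove the cyclic chain of implications $(2)\Rightarrow(3)\Rightarrow(1)\Rightarrow(2)$.

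For $(2)\Rightarrow(3)$: If $X\leq Y$ pointwise on $\Omega$ and $f$ is increasing, then $f\circ X\leq f\circ Y$ pointwise, and taking expectations (assuming the integrals exist) yields $\int f\,d\mu=\E f(X)\leq \E f(Y)=\int f\,d\nu$. This is immediate.

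For $(3)\Rightarrow(1)$: Fix $t\in\R$. I would like to plug in $f=\mathbf{1}_{[t,\iy)}$, which is (weakly) increasing and gives exactly the tail inequality $\mu[t,\iy)\leq \nu[t,\iy)$. If only strictly increasing functions are allowed in (3), I would instead approximate $\mathbf{1}_{[t,\iy)}$ from below by a sequence of strictly increasing bounded continuous functions $f_n$ with $f_n\uparrow \mathbf{1}_{[t,\iy)}$, apply (3) to each $f_n$, and pass to the limit by monotone convergence.

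For the main step $(1)\Rightarrow(2)$, I would construct an explicit coupling via quantile functions. Let $F_\mu(t)=\mu(-\iy,t]$ and $F_\nu(t)=\nu(-\iy,t]$. Condition (1) translates (after using right-continuity of $F_\mu,F_\nu$) into $F_\mu(t)\geq F_\nu(t)$ for all $t\in\R$. Define the generalized inverses
\[ F_\mu^{-1}(u)=\inf\{x\in\R : F_\mu(x)\geq u\},\qquad F_\nu^{-1}(u)=\inf\{x\in\R : F_\nu(x)\geq u\},\]
for $u\in(0,1)$. The pointwise inequality $F_\mu\geq F_\nu$ immediately gives $F_\mu^{-1}(u)\leq F_\nu^{-1}(u)$ for every $u$. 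Now take $\Omega=(0,1)$ with Lebesgue measure, let $U$ be the identity random variable (uniform on $(0,1)$), and set $X=F_\mu^{-1}(U)$, $Y=F_\nu^{-1}(U)$. The standard fact that $F^{-1}(U)$ has law with c.d.f.\ $F$ shows $X\sim\mu$ and $Y\sim\nu$, while the pointwise inequality above gives $X(\omega)\leq Y(\omega)$ for all $\omega$.

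The main obstacle is the coupling construction $(1)\Rightarrow(2)$: it relies on the correct handling of the generalized inverse of a c.d.f.\ and on verifying that $F_\mu^{-1}(U)$ has the desired distribution, which uses the equivalence $F_\mu^{-1}(u)\leq x \iff u\leq F_\mu(x)$. The other two implications are essentially formal. Since this theorem is classical (and attributed by the authors to \cite{stoyan,gs}), I would probably just sketch the argument and refer to those sources rather than reproving it in full.
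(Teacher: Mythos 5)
The paper does not prove this theorem; it is stated as a classical result and attributed to \cite{stoyan,gs} with no argument given. Your sketch is correct and is the standard textbook proof found in those references: the quantile coupling $X=F_\mu^{-1}(U)$, $Y=F_\nu^{-1}(U)$ for $(1)\Rightarrow(2)$, the indicator $\mathbf{1}_{[t,\infty)}$ for $(3)\Rightarrow(1)$, and monotonicity of expectation for $(2)\Rightarrow(3)$. One small detail worth being explicit about: the passage from $\mu[t,\infty)\leq\nu[t,\infty)$ to $F_\mu\geq F_\nu$ requires taking a right limit in $t$ (as you note, the raw inequality gives $\mu(-\infty,t)\geq\nu(-\infty,t)$, i.e.\ a comparison of left limits of the c.d.f.'s, and right-continuity then upgrades it to the c.d.f.'s themselves); you flag this but it deserves a full sentence. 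Your instinct to simply cite \cite{stoyan,gs} rather than reprove matches exactly what the authors do.
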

It is easily checked that stochastic domination is well-behaved with respect to convolution.

\begin{lemma}
\label{convolution}
Let $\mu_1$, $\mu_2$, $\nu_1$, $\nu_2$ be probability measures on the real line. If $\mu_1 \stochprec \nu_1$ and $\mu_2 \stochprec \nu_2$, then $\mu_1 * \mu_2 \stochprec \nu_1 * \nu_2$.
\end{lemma}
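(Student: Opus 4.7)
The natural plan is to invoke the sample path characterization (characterization (2) in the theorem just quoted) and transport everything back to a pointwise inequality between random variables.

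First, using $\mu_1 \stochprec \nu_1$, produce a probability space $(\Omega_1,\mathcal F_1,\P_1)$ carrying random variables $X_1,Y_1$ with laws $\mu_1,\nu_1$ and $X_1 \leq Y_1$ everywhere; similarly, using $\mu_2 \stochprec \nu_2$, build $(\Omega_2,\mathcal F_2,\P_2)$ with $X_2,Y_2$ of laws $\mu_2,\nu_2$ satisfying $X_2 \leq Y_2$ everywhere. The subtlety is that one cannot simply add them on the individual spaces, because the laws of $X_1+X_2$ and $Y_1+Y_2$ would not be the desired convolutions without independence between the two pairs.

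To fix this, I would pass to the product space $\Omega = \Omega_1 \times \Omega_2$ with product $\sigma$-algebra and product measure, and lift the variables by $\tilde X_i(\omega_1,\omega_2) := X_i(\omega_i)$, $\tilde Y_i(\omega_1,\omega_2) := Y_i(\omega_i)$. Then $(\tilde X_1,\tilde Y_1)$ and $(\tilde X_2,\tilde Y_2)$ are independent, so $\tilde X_1$ is independent of $\tilde X_2$ with marginals $\mu_1,\mu_2$, hence $\tilde X_1+\tilde X_2$ has law $\mu_1*\mu_2$; similarly $\tilde Y_1+\tilde Y_2$ has law $\nu_1*\nu_2$. The pointwise inequalities $\tilde X_i \leq \tilde Y_i$ persist on the product space, so
\[ \tilde X_1(\omega) + \tilde X_2(\omega) \leq \tilde Y_1(\omega) + \tilde Y_2(\omega) \quad \text{for every } \omega \in \Omega. \]

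Applying the sample path characterization in the reverse direction (which is the easy implication: pointwise domination clearly implies domination of the tail functions), we conclude $\mu_1*\mu_2 \stochprec \nu_1*\nu_2$. There is no real obstacle here; the only point to watch is to take the product of the spaces rather than trying to couple on a single one, so that the convolution structure emerges automatically from independence of the lifted variables.
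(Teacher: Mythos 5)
Your proof is correct, and the paper itself offers no argument at all: the authors simply remark that the lemma ``is easily checked,'' so there is no official proof to compare against. Your route through the sample path characterization is one of the two standard ways to do it, and you correctly flag the one genuine subtlety, namely that the two couplings given by the characterization live on a priori unrelated spaces, so one must form the product space to get independence of $(\tilde X_1,\tilde Y_1)$ from $(\tilde X_2,\tilde Y_2)$ and hence the right convolution laws for $\tilde X_1+\tilde X_2$ and $\tilde Y_1+\tilde Y_2$. The only other remark worth making is that there is an equally short proof via the functional characterization, which avoids couplings entirely: for an increasing $f$, write $\int f\,d(\mu_1*\mu_2)=\iint f(x+y)\,d\mu_1(x)\,d\mu_2(y)$; since $x\mapsto f(x+y)$ is increasing for each fixed $y$, the inner integral increases when $\mu_1$ is replaced by $\nu_1$, giving $\mu_1*\mu_2\stochprec\nu_1*\mu_2$, and then observing that $y\mapsto\int f(x+y)\,d\nu_1(x)$ is again increasing gives $\nu_1*\mu_2\stochprec\nu_1*\nu_2$. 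The coupling proof is more vivid; the functional proof requires no auxiliary probability space and dualizes more readily. Either is fine here.
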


\begin{lemma}
\label{conv-power}
Let $\mu$ and $\nu$ be two probability measures on the real line such that $\mu \stochprec \nu$. Then, for all $n \geq 2$, $\mu^{*n} \stochprec \nu^{*n}$.
\end{lemma}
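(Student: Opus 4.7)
The plan is to derive this lemma by a straightforward induction on $n$, using Lemma \ref{convolution} as the main tool. Since Lemma \ref{convolution} already handles how stochastic domination interacts with convolution of two pairs of measures, iterating it should immediately yield the claim.

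For the base case $n=2$, I would apply Lemma \ref{convolution} with $\mu_1 = \mu_2 = \mu$ and $\nu_1 = \nu_2 = \nu$. The hypotheses $\mu_1 \stochprec \nu_1$ and $\mu_2 \stochprec \nu_2$ are both just the given assumption $\mu \stochprec \nu$, so the conclusion $\mu * \mu \stochprec \nu * \nu$ is precisely $\mu^{*2} \stochprec \nu^{*2}$.

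For the inductive step, assume $\mu^{*n} \stochprec \nu^{*n}$ for some $n \geq 2$. I would then apply Lemma \ref{convolution} with $\mu_1 = \mu^{*n}$, $\nu_1 = \nu^{*n}$, $\mu_2 = \mu$, $\nu_2 = \nu$: the first pair satisfies stochastic domination by the inductive hypothesis, the second by assumption. The conclusion $\mu^{*n} * \mu \stochprec \nu^{*n} * \nu$ is $\mu^{*(n+1)} \stochprec \nu^{*(n+1)}$, completing the induction.

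There is no real obstacle here; the lemma is a formal consequence of the previous one. The only thing to be a little careful about is that Lemma \ref{convolution} is stated for two arbitrary pairs, so one has the freedom to pick one of the factors to be the $n$-fold convolution and the other to be a single copy of $\mu$ (resp.\ $\nu$), which is exactly what makes the induction go through in one line.
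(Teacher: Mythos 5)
Your proof is correct, and it is precisely the argument the paper has in mind (the paper states Lemma \ref{conv-power} without an explicit proof, treating it as an immediate consequence of Lemma \ref{convolution}). The induction you spell out is the natural way to make that implicit step rigorous.
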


For fixed $\mu$ and $\nu$, it follows from Lemma \ref{convolution} that the set of integers $k$ so that $\mu^{*k} \stochprec \nu^{*k}$ is stable under addition. In general $\mu^{*n} \stochprec \nu^{*n}$ does not imply $\mu^{*(n+1)} \stochprec \nu^{*(n+1)}$. Here is a typical example.

\begin{ex}\label{ex:1239}
Let $\mu$ and $\nu$ be the probability measures defined as
\[ \mu = 0.4 \delta_0 + 0.6 \delta_2 \]
\[ \nu = 0.8 \delta_1 + 0.2 \delta_3 \]
It is straightforward to verify (see Figure \ref{fig:cum_dist_1239}) that
\begin{itemize}
 \item For $k=2$, and therefore for all even $k$, we have $\mu^{*k} \stochprec \nu^{*k}$.
 \item For $k$ odd, we have $\mu^{*k} \stochprec \nu^{*k}$ only for $k \geq 9$.
\end{itemize}
\begin{figure}[h]
\includegraphics[width=3.5cm,height=4cm]{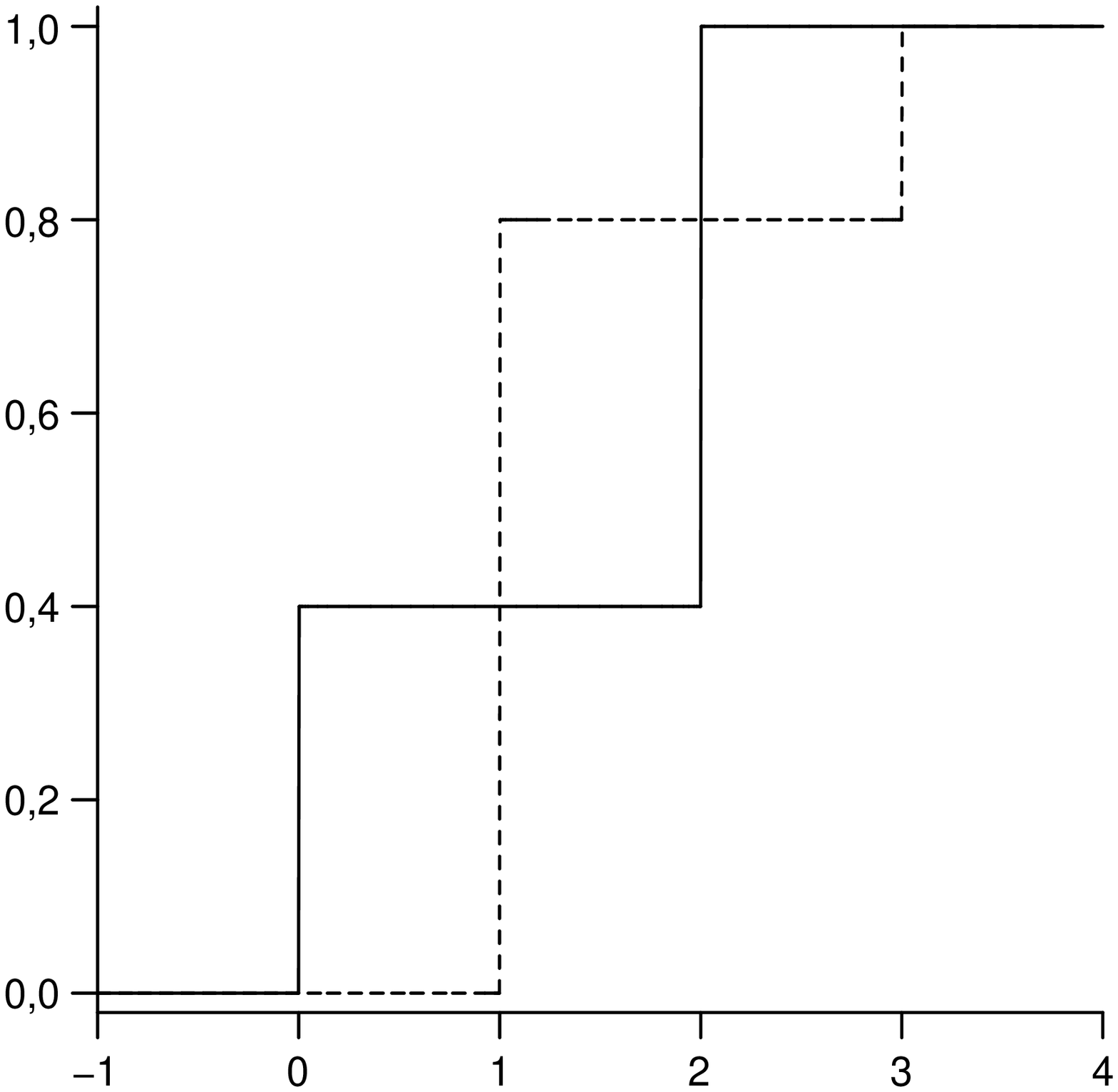}
\includegraphics[width=3.5cm,height=4cm]{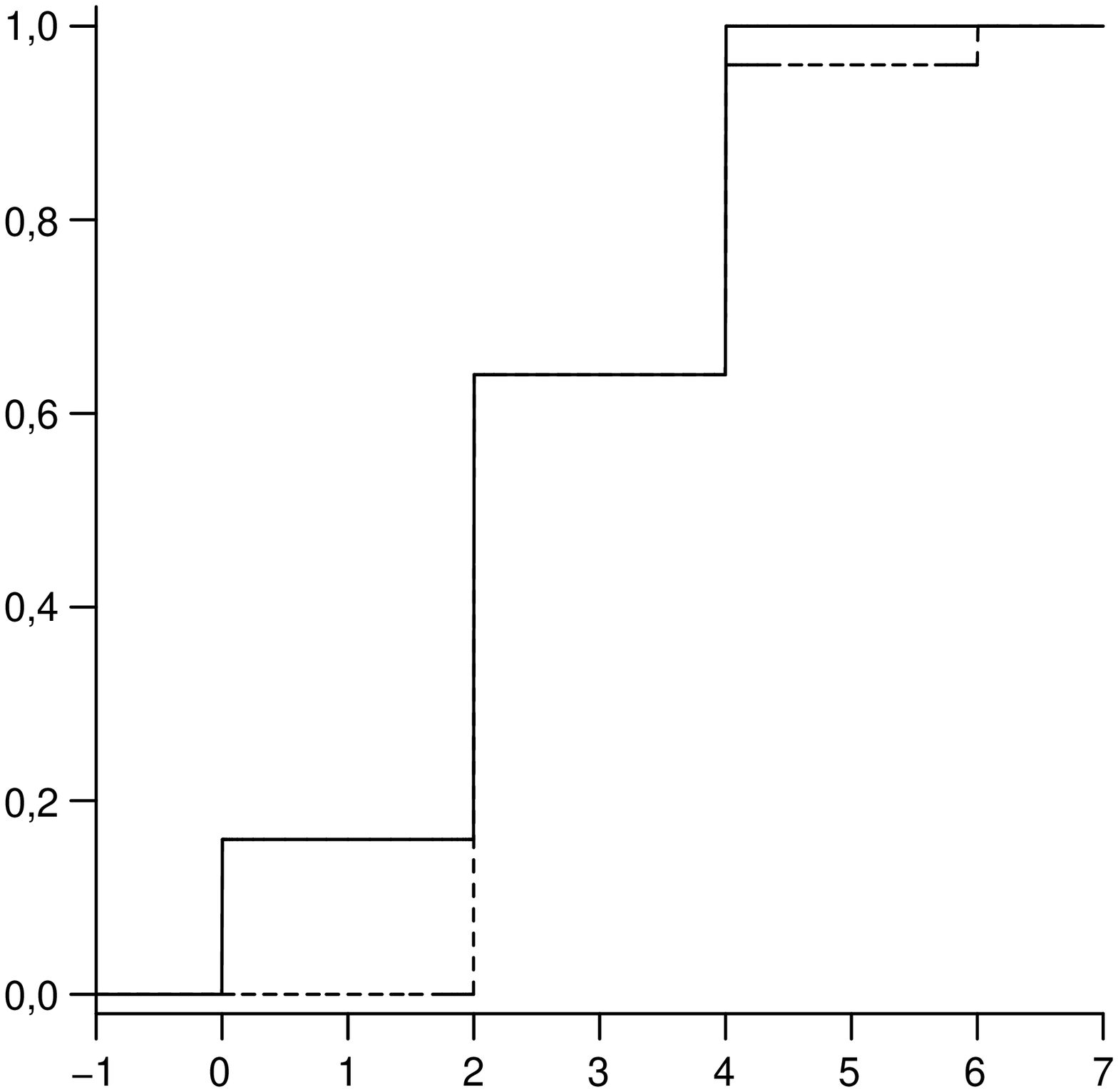}
\includegraphics[width=3.5cm,height=4cm]{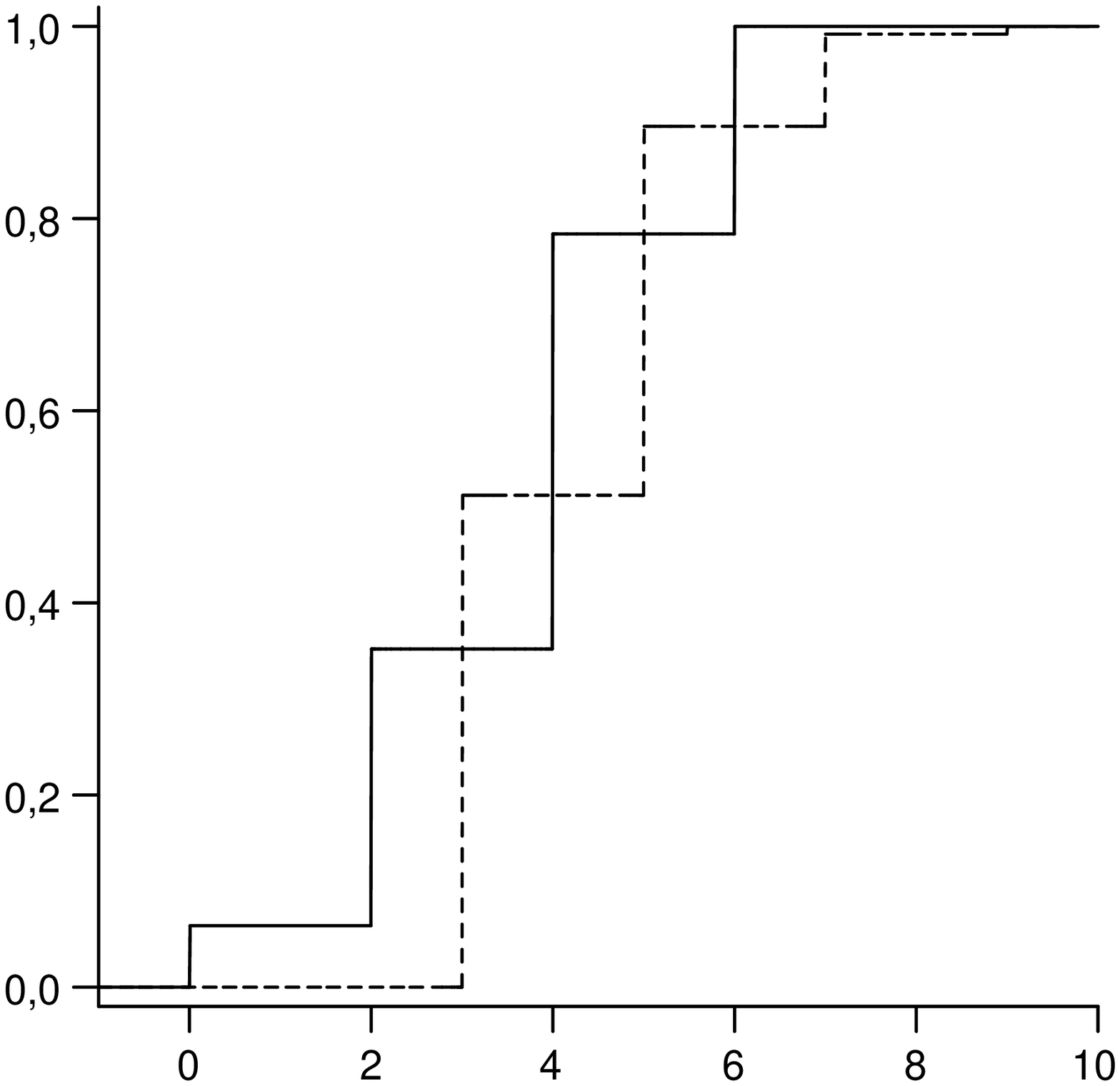}
\includegraphics[width=3.5cm,height=4cm]{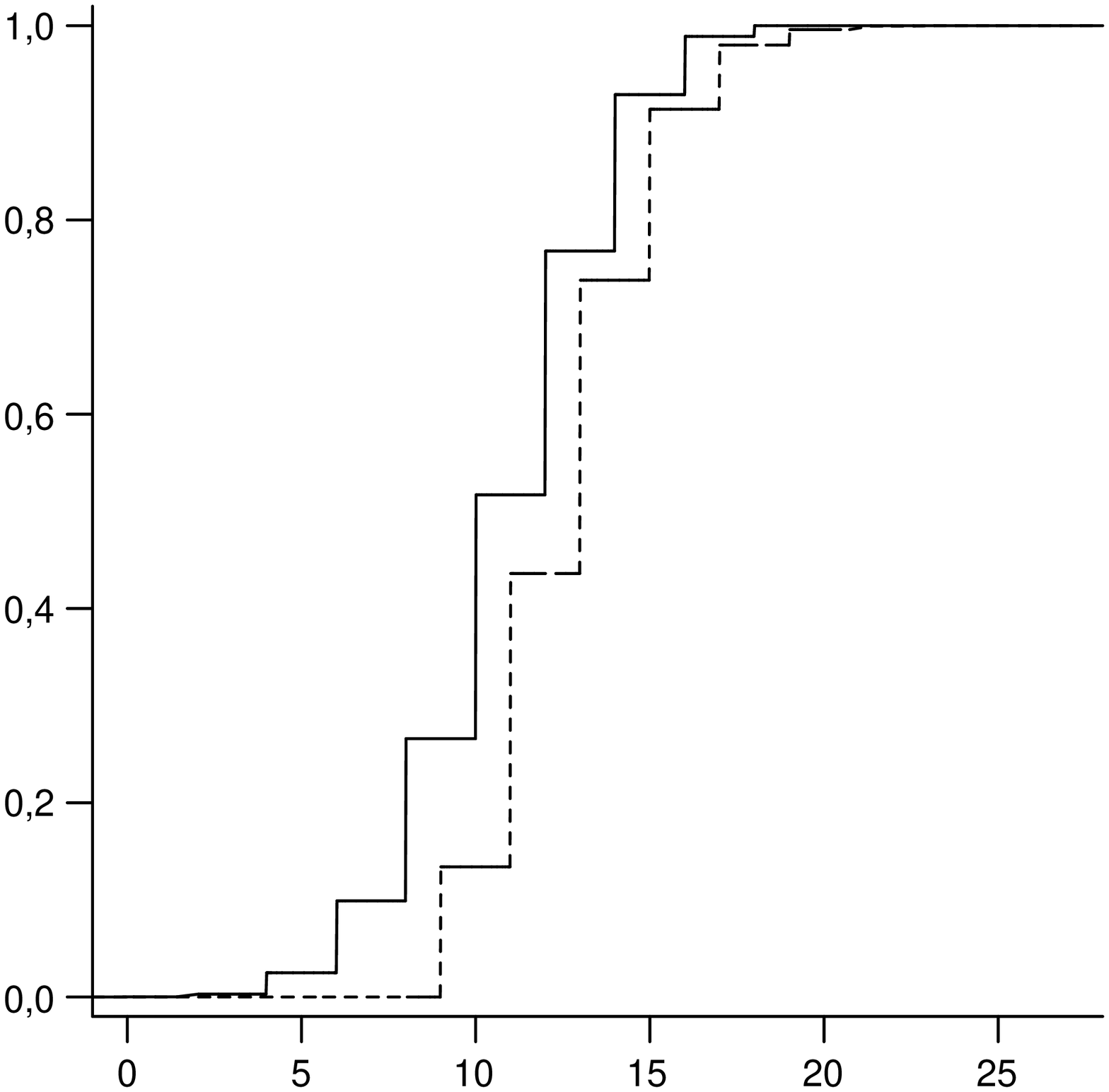}
\caption{Cumulative distribution functions of of $\mu^{*k}$ (solid line) and $\nu^{*k}$ (dotted line) from Example \ref{ex:1239} for $k=1,2,3,9$.}\label{fig:cum_dist_1239}
\end{figure}
\end{ex}

Other examples show that the minimal $n$ so that $\mu^{*n} \stochprec \nu^{*n}$ can be arbitrary large. This is the content of the next proposition.

\begin{pr}
\label{2points}
For every integer $n$, there exist compactly supported probability measures $\mu$ and $\nu$ such that 
$\mu^{*n} \stochprec \nu^{*n}$ and, for all $1 \leq k \leq n-1$, $\mu^{*k} \notstochprec \nu^{*k}$. 
\end{pr}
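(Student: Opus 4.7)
The plan is to use a two-point construction: set
\[ \mu = \tfrac{1}{2}\delta_0 + \tfrac{1}{2}\delta_n, \qquad \nu = (\tfrac{1}{2}+\e)\delta_1 + (\tfrac{1}{2}-\e)\delta_{n+1}, \]
for a small parameter $\e > 0$ to be chosen. Both measures are compactly supported, and a direct computation gives $\int x\,d\mu = n/2$ and $\int x\,d\nu = (n+2)/2 - n\e$, so the necessary mean condition $\int x\,d\mu < \int x\,d\nu$ holds whenever $\e < 1/n$. The convolutions are scaled, shifted binomials: $\mu^{*k}$ is supported on $\{jn : 0 \leq j \leq k\}$ with mass $\binom{k}{j}2^{-k}$ at $jn$, while $\nu^{*k}$ is supported on $\{k+jn : 0 \leq j \leq k\}$ with mass $\binom{k}{j}(\tfrac{1}{2}-\e)^j(\tfrac{1}{2}+\e)^{k-j}$ at $k+jn$.

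For each $1 \leq k \leq n-1$, I take $t = kn$ as the witness for non-domination. Since $0 < k/n < 1$, the integer $kn$ is not a support point of $\nu^{*k}$; in fact it lies strictly between the consecutive support points $k+(k-1)n$ and $k+kn$. Therefore $\mu^{*k}[kn,\infty) = \mu^{*k}\{kn\} = 2^{-k}$, while $\nu^{*k}[kn,\infty) = \nu^{*k}\{k+kn\} = (\tfrac{1}{2}-\e)^k < 2^{-k}$, so $\mu^{*k} \notstochprec \nu^{*k}$.

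For $k = n$ the supports line up: both $\mu^{*n}$ and $\nu^{*n}$ contain $\{n, 2n, \ldots, n^2\}$. Since the survival functions are piecewise constant, it suffices to verify $\mu^{*n}[jn,\infty) \leq \nu^{*n}[jn,\infty)$ for $j = 1, \ldots, n$; all other values of $t$ are trivial ($t \leq 0$ or $t > n(n+1)$) or reduce to one of these by monotonicity of survival. Writing $W \sim \mathrm{Bin}(n,1/2)$ and $Z \sim \mathrm{Bin}(n,1/2-\e)$, the inequalities to verify become $\PP(W \geq j) \leq \PP(Z \geq j-1)$. At $\e = 0$, $Z$ has the same law as $W$ and this reduces to $\PP(W \geq j) \leq \PP(W \geq j-1)$, which is strict with gap $\PP(W=j-1) = \binom{n}{j-1}2^{-n} > 0$. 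By continuity in $\e$ of the finitely many polynomial expressions involved, the strict inequalities persist for all sufficiently small $\e > 0$.

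The construction is driven by the divisibility dichotomy $n \mid k$ versus $n \nmid k$: when $k < n$ the supports of $\mu^{*k}$ and $\nu^{*k}$ are disjoint (offset by $k \not\equiv 0 \pmod n$) and the top of $\mu^{*k}$'s support falls in a gap of $\nu^{*k}$'s, producing the witness $t = kn$; when $k = n$ the supports interlock and the slack at $\e=0$ absorbs a small perturbation. The main subtlety is arguably the stability step at $k=n$ — picking one $\e > 0$ in the intersection of the $n$ open conditions together with $\e < 1/n$ — but this follows immediately from positivity of the gaps $\binom{n}{j-1}2^{-n}$ and continuity of finitely many polynomials.
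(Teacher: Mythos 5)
Your proof is correct, but it takes a genuinely different route from the paper's. The paper sets $\mu = \e\delta_{-2n} + (1-\e)\delta_1$ against $\nu$ uniform on $[0,2]$; since $\nu^{*k}$ lives on $[0,2k]\subset\R_+$ while for $1\le k\le n$ the only mass of $\mu^{*k}$ in $\R_+$ is the single atom $(1-\e)^k\delta_k$, the domination $\mu^{*k}\stochprec\nu^{*k}$ is \emph{equivalent} to the one scalar inequality $(1-\e)^k\le\nu^{*k}[k,+\iy)=1/2$, the $1/2$ coming for free from the symmetry of the uniform convolution about its mean $k$; one then simply picks $\e$ with $(1-\e)^n<1/2<(1-\e)^{n-1}$. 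Your construction keeps both measures two-point and purely atomic: at $\e=0$ the two convolutions are exact translates by $k$, so for $k=n$ the domination holds with positive slack $\binom{n}{j-1}2^{-n}$ at every jump, while for $k<n$ the translation $k$ is incommensurate with the lattice spacing $n$, placing the top atom of $\mu^{*k}$ strictly inside a gap of $\supp\nu^{*k}$ and giving the clean witness $t=kn$ for any $\e>0$. Your argument is complete (the interval $(n^2,\,n^2+n]$ that you don't single out is trivially handled since $\mu^{*n}$ carries no mass there). The trade-off is that you verify $n$ binomial-tail inequalities via a perturbation/continuity step, whereas the paper's symmetry trick collapses the whole verification to a single inequality in $\e$ and produces an explicit equivalence rather than just a sufficient condition. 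Both are elementary; yours makes the lattice-offset mechanism transparent, theirs makes the threshold in $\e$ explicit.
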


\begin{proof}
Let $\mu = \e \delta_{-2n} + (1-\e)\delta_1$ and $\nu$ be the uniform measure on $[0,2]$, where $0<\e<1$ will be defined later. For $k \geq 1$, 
\[\mu^{*k} = \sum_{i=0}^{k} \binom{k}{i} (1-\e)^i\e^{k-i}\delta_{i-2n(k-i)},\]
Note that $\supp(\nu^{*k}) \subset \R^+$, while for $1 \leq k \leq n$, the only part of $\mu^{*k}$ charging $\R_+$ is the Dirac mass at point $k$. This implies that 
\[ \mu^{*k} \stochprec \nu^{*k} \iff \mu^{*k}[k,+\iy) \leq \nu^{*k}[k,+\iy) .\]
We have $\mu^{*k}[k,+\iy) = (1-\e)^k$ and $\nu^{*k}[k,+\iy)=1/2$. It remains to choose $\e$ so that $(1-\e)^n < 1/2 <(1-\e)^{n-1}$.
\end{proof}


\section{Stochastic domination for iterated convolutions and Cramér's theorem}\label{sec:results}

In light of previous examples, we are going to study the following extension of stochastic domination

\begin{defn}
 We define a relation $\Mstochprec$ on $\Prob(\R)$ as follows
\[ \mu \Mstochprec \nu \iff \exists n \geq 1 \st  \mu^{*n} \stochprec \nu^{*n} .\]
\end{defn}

In turns that when defined on $\Prob(\R)$, this relation is not an order relation due to pathological poorly integrable measures. Indeed, there exist  two probability measures $\mu$ and $\nu$ so that $\mu \neq \nu$ and $\mu * \mu = \nu * \nu$ (see \cite{feller}, p. 479). Therefore, the relation $\Mstochprec$ is not anti-symmetric. For this reason, we restrict ourselves to sufficiently integrable measures (however, most of what follows generalizes to wider classes of measures). This is quite usual when studying orderings of probability measures, see \cite{stoyan} for examples of such situations.

\begin{defn}
A  measure $\mu$ on $\R$ is said to be \emph{exponentially integrable} if $\int \el d\mu < +\iy$ for all $\lambda \in \R$ (recall that $\el(x)=\exp(\lambda x)$). We write $\PE$ for the set of exponentially integrable probability measures.
\end{defn}

Notice that the space of exponentially integrable measures is stable under convolution. 

\begin{pr}
When restricted to $\PE$, the relation $\Mstochprec$ is a partial order.
\end{pr}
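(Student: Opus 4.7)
The plan is to verify the three axioms of a partial order in turn: reflexivity, transitivity, and antisymmetry. Reflexivity is immediate by taking $n=1$ in the definition.

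For transitivity, suppose $\mu^{*n}\stochprec\nu^{*n}$ and $\nu^{*m}\stochprec\rho^{*m}$. I would apply Lemma \ref{conv-power} to the first relation with exponent $m$, and to the second with exponent $n$, yielding $\mu^{*nm}\stochprec\nu^{*nm}$ and $\nu^{*nm}\stochprec\rho^{*nm}$. Since $\stochprec$ is itself transitive, this gives $\mu^{*nm}\stochprec\rho^{*nm}$, hence $\mu\Mstochprec\rho$.

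The main obstacle is antisymmetry, since $\stochprec$ is antisymmetric on $\Prob(\R)$ but we cannot conclude $\mu^{*n}=\nu^{*n}$ from knowing only $\mu^{*n}\stochprec\nu^{*n}$ and $\nu^{*m}\stochprec\mu^{*m}$ for different indices $n,m$. Here is where exponential integrability is essential. Assume both relations above. For each fixed $\lambda>0$, the function $\el$ is increasing and $\el\in L^1(\mu^{*n})\cap L^1(\nu^{*n})$ because $\mu,\nu\in\PE$ and $\PE$ is stable under convolution. The functional characterization gives
\[ \Bigl(\int\el\, d\mu\Bigr)^{n} = \int\el\, d\mu^{*n} \leq \int\el\, d\nu^{*n} = \Bigl(\int\el\, d\nu\Bigr)^{n}, \]
so $\int\el\, d\mu\leq\int\el\, d\nu$, and exchanging roles via the second hypothesis yields equality. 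Thus the Laplace transforms $L_\mu(\lambda)=\int\el\, d\mu$ and $L_\nu$ coincide on $(0,+\iy)$.

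The final step is to upgrade this to $\mu=\nu$. Since $\mu,\nu\in\PE$, both Laplace transforms extend to entire functions on $\C$ (the bound $\int e^{|\lambda x|}d\mu<\iy$ for every $\lambda$ allows differentiation under the integral to all orders). Two entire functions agreeing on $(0,+\iy)$ must agree everywhere by the identity principle. Restricting to the imaginary axis identifies the characteristic functions of $\mu$ and $\nu$, which forces $\mu=\nu$ by the uniqueness theorem for Fourier transforms of probability measures. I expect the analytic-continuation step to be the subtlest point, but the restriction to $\PE$ was introduced precisely to make it go through cleanly.
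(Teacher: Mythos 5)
Your proof is correct, but the route to antisymmetry differs from the paper's. The paper first reduces to a common power: from $\mu^{*k}\stochprec\nu^{*k}$ and $\nu^{*l}\stochprec\mu^{*l}$ it derives $\mu^{*kl}\stochprec\nu^{*kl}\stochprec\mu^{*kl}$ via Lemma \ref{conv-power}, hence $\mu^{*kl}=\nu^{*kl}$ by antisymmetry of $\stochprec$; it then shows by induction on $p$ that $\mu$ and $\nu$ have the same moments $m_p$, and invokes Carleman's condition (which holds because of exponential integrability) to conclude that the measures are moment-determinate and therefore equal. You bypass the reduction to a common power altogether: you extract $\int\el\,d\mu\leq\int\el\,d\nu$ directly from the multiplicativity of $\int\el$ under convolution, obtain equality of Laplace transforms on $(0,+\iy)$ by symmetry, and then conclude via analytic continuation and Fourier uniqueness. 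Both arguments use exponential integrability at the decisive step, but in different ways: the paper needs it to verify Carleman's growth bound on the moments, you need it to show that the Laplace transform is entire. One small simplification of your version: the same argument with $\lambda<0$ (where $\el$ is decreasing, so the stochastic-domination inequality reverses) also gives $\int\el\,d\mu=\int\el\,d\nu$ there, and trivially at $\lambda=0$; so the moment generating functions agree on all of $\R$, and the standard uniqueness theorem for measures whose MGF is finite on a neighborhood of the origin applies directly, without needing to continue into the complex plane.
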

\begin{proof}
One has to check only the antisymmetry property, the other two being obvious. Let $k$ and $l$ be two integers such that $\mu^{*k} \stochprec \nu^{*k}$ and $\nu^{*l} \stochprec \mu^{*l}$. Then $\mu^{*kl} \stochprec \nu^{*kl} \stochprec \mu^{*kl}$ and therefore $\mu^{*kl} = \nu^{*kl}$. But if $\mu$ and $\nu$ are exponentially integrable, this implies that $\mu=\nu$. One can see this in the following way: if we denote the moments of $\mu$ by $m_p(\mu) = \int x^p d\mu(x)$, one checks by induction on $p$ that $m_p(\mu)=m_p(\nu)$ for all $p \in \N$. On the other hand, exponential integrability implies that $m_{2p}(\mu)^{1/2p} \leq C p$ for some constant $C$, so that Carleman's condition is satisfied (see \cite{feller}, p. 224). Therefore $\mu$ is determined by its moments and $\mu=\nu$.
\end{proof}

We would like to give a description of the relation $\Mstochprec$, for example similar to the functional characterization of $\stochprec$. We start with the following lemma

\begin{lemma}
\label{lemma-easydirection}
Let $\mu,\nu \in \PE$  such that $\mu \Mstochprec \nu$. Then the following inequalities hold:
\begin{enumerate}
 \item[(a)] $\forall \lambda > 0, \int \el d\mu \leq \int \el d\nu$,
 \item[(b)] $\forall \lambda < 0, \int \el d\mu \geq \int \el d\nu$,
 \item[(c)] $\int x d\mu(x) \leq \int x d\nu(x)$,
 \item[(d)] $\min \mu \leq \min \nu$,
 \item[(e)] $\max \mu \leq \max \nu$,
\end{enumerate}
\end{lemma}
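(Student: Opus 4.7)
The plan is to fix an integer $n\geq 1$ with $\mu^{*n}\stochprec\nu^{*n}$ and then extract each of the five inequalities from this single stochastic domination. The strategy is always the same: test the relation against an appropriate monotone function (via the functional characterization recalled just above) or directly against a tail, and then quotient out by $n$ using the fact that $\int\el\,d\mu^{*n}$, $\int x\,d\mu^{*n}$, $\min\mu^{*n}$ and $\max\mu^{*n}$ all depend in a very simple way on $\mu$ and $n$.

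For (a), I would apply the functional characterization to $f=\el$ with $\lambda>0$: this function is increasing, and both integrals are finite since $\mu,\nu\in\PE$. Multiplicativity of Laplace transforms under convolution, which follows from $\el(x+y)=\el(x)\el(y)$, rewrites the resulting inequality as $\bigl(\int\el\,d\mu\bigr)^n\leq\bigl(\int\el\,d\nu\bigr)^n$; taking $n$-th roots (legitimate since Laplace integrals of probability measures are strictly positive) gives (a). Part (b) is the same argument applied to the increasing function $-\el$ when $\lambda<0$, which flips the initial inequality. Part (c) uses the functional characterization with the increasing map $x\mapsto x$ together with the additivity $\int x\,d\mu^{*n}=n\int x\,d\mu$, followed by a division by $n$.

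For (d) and (e), I would read the inequalities directly from the tail definition of $\stochprec$, using $\min\mu^{*n}=n\min\mu$ and $\max\mu^{*n}=n\max\mu$ (with the natural conventions at $\pm\iy$). The inequality $\max\mu^{*n}\leq\max\nu^{*n}$ is obtained by picking $t$ just above $\max\nu^{*n}$ so that $\nu^{*n}[t,\iy)=0$ forces $\mu^{*n}[t,\iy)=0$, and the inequality $\min\mu^{*n}\leq\min\nu^{*n}$ by picking $t$ just below $\min\mu^{*n}$ so that $\mu^{*n}[t,\iy)=1$ forces $\nu^{*n}[t,\iy)=1$. The edge cases $\min\mu=-\iy$ or $\max\nu=+\iy$ are trivial. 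Dividing by $n$ then yields (d) and (e). The whole argument is routine and I foresee no real obstacle, the only small points of care being the $\pm\iy$ bookkeeping in (d)--(e) and the positivity needed to take $n$-th roots in (a)--(b).
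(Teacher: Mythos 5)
Your proof is correct and follows essentially the same route as the paper's: for (a)--(c) apply the functional characterization of $\stochprec$ to the increasing functions $\el$ (or $-\el$) and $x\mapsto x$, use $\int\el\,d\mu^{*n}=\bigl(\int\el\,d\mu\bigr)^n$ and $\int x\,d\mu^{*n}=n\int x\,d\mu$, then take $n$-th roots or divide by $n$; for (d)--(e) use $\min\mu^{*n}=n\min\mu$ and $\max\mu^{*n}=n\max\mu$. You merely spell out the tail argument and the $\pm\iy$ conventions in (d)--(e) a bit more explicitly than the paper, which dismisses them as obvious.
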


\begin{proof}
Let $\mu \Mstochprec \nu$ and $\lambda > 0$. Since $\mu^{*n} \leq \nu^{*n}$ for some $n$, we get from the functional characterization of $\stochprec$ that
\[ \int \el d\mu^{*n} \leq \int \el d\nu^{*n}. \]
It remains to notice that
\[ \int \el d\mu^{*n} = \left( \int \el d\mu \right)^n \]
and we get (a). The proof of (b) is completely symmetric, while (c) follows also from the functional characterization. Conditions (d) and (e) are obvious since $\min(\mu^{*n}) = n \min (\mu)$ and $\max(\mu^{*n}) = n \max (\mu)$.
\end{proof}

The following Proposition shows that the necessary conditions of Lemma \ref{lemma-easydirection} are ``almost sufficient''.

\begin{pr}
\label{probabilisticlemma}
Let $\mu,\nu \in \PE$. Assume that the following inequalities hold 
\begin{enumerate}
\item[(a)] $\forall \lambda > 0, \int \el d\mu < \int \el d\nu $.
\item[(b)] $\forall \lambda < 0, \int \el d\nu < \int \el d\mu $.
\item[(c)] $\int x d\mu(x) < \int x d\nu(x)$.
\item[(d)] $\max \mu < \max \nu$.
\item[(e)] $\min \mu < \min \nu$.
\end{enumerate}
Then $\mu \Mstochprec \nu$, and more precisely there exists an integer $N \in \N$ such that for any $n\geq N$, $\mu^{*n} \stochprec \nu^{*n}$.
\end{pr}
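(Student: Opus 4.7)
The plan is to use Cram\'er's theorem of large deviations to translate conditions (a)--(e) into a strict separation between the large-deviations rate functions of $\mu$ and $\nu$, and then to compare the tails of $\mu^{*n}$ and $\nu^{*n}$ uniformly in $n$. First I would use conditions (d) and (e) to reduce the problem to a compact range of $t$: for $t<n\min\nu$ we have $\nu^{*n}[t,\iy)=1$, while for $t>n\max\mu$ we have $\mu^{*n}[t,\iy)=0$, so the inequality to be proved is automatic in both cases. After the substitution $t=nx$, it suffices to show $\mu^{*n}[nx,\iy)\leq \nu^{*n}[nx,\iy)$ uniformly in $x$ on the compact interval $J=[\min\nu,\max\mu]$.

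Let $\Lambda_\mu(\lambda)=\log\int\el\,d\mu$ and $\Lambda_\nu(\lambda)=\log\int\el\,d\nu$ (finite on $\R$ since $\mu,\nu\in\PE$), and let $I_\mu,I_\nu$ be their Legendre transforms; write $m_\mu=\int x\,d\mu$ and $m_\nu=\int x\,d\nu$. The key observation, drawn from (a) and (b), is the strict separation
\[ I_\mu(x)>I_\nu(x) \textnormal{ for } x>m_\mu, \qquad I_\nu(x)>I_\mu(x) \textnormal{ for } x<m_\nu. \]
Indeed, for $x>m_\mu$ the supremum defining $I_\mu(x)$ is attained at some $\lambda^*>0$; applying (a) at $\lambda^*$ gives $\lambda^*x-\Lambda_\mu(\lambda^*)>\lambda^*x-\Lambda_\nu(\lambda^*)$, and taking the supremum over $\lambda$ on the right yields $I_\mu(x)>I_\nu(x)$. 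The other inequality is symmetric via (b). By (c), the interval $(m_\mu,m_\nu)$ is non-empty; fix $\bar m$ in it and split $J$ into $J_-=[\min\nu,\bar m]$ and $J_+=[\bar m,\max\mu]$.

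I would conclude by Cram\'er's theorem applied on each piece. On $J_+$, Chernoff's (non-asymptotic) upper bound yields $\mu^{*n}[nx,\iy)\leq\exp(-nI_\mu(x))$, while Cram\'er's lower bound, made uniform in $x\in J_+$ via a compactness/covering argument, gives $\nu^{*n}[nx,\iy)\geq\exp(-n(I_\nu(x)+\eta))$ for any prescribed $\eta>0$ and all $n$ sufficiently large. Continuity and compactness produce a uniform gap $\inf_{x\in J_+}(I_\mu(x)-I_\nu(x))=3\eta>0$, so combining the two bounds yields $\mu^{*n}[nx,\iy)\leq e^{-n\eta}\nu^{*n}[nx,\iy)$ for all $x\in J_+$ and all large enough $n$. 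The subinterval $J_-$ is handled symmetrically by comparing lower tails via $\mu^{*n}[t,\iy)\leq\nu^{*n}[t,\iy)\iff\nu^{*n}(-\iy,t)\leq\mu^{*n}(-\iy,t)$. The main technical obstacle is obtaining Cram\'er's lower bound uniformly on $J_+$: near the boundary $x=\bar m$ the exponent $I_\nu(x)$ is close to zero and must be balanced carefully against the strictly positive $I_\mu(x)$, typically by covering $J_+$ with small subintervals on which one can invoke either the uniform Cram\'er lower bound or, in the ``near-mean'' subregion, the fact that $\nu^{*n}[nx,\iy)$ stays bounded below by $1/2$ while $\mu^{*n}[nx,\iy)$ is already exponentially small.
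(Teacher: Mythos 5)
Your high-level plan is the same as the paper's: use (d) and (e) to reduce the claim to a compact interval, split at a point $t_0\in(M_\mu,M_\nu)$ using (c), apply Cram\'er's theorem on each piece, and finally upgrade pointwise to uniform control via compactness. The paper's actual execution is a bit slicker: instead of trying to make Cram\'er's lower bound uniform, it compares the $n$-th roots $f_n^{1/n}$ and $g_n^{1/n}$ and invokes the Dini/P\'olya theorem (pointwise convergence of a monotone sequence of functions on a compact interval to a \emph{continuous} limit is uniform), which sidesteps the covering argument you sketch. That is a genuine, and cleaner, alternative to your Chernoff-plus-uniform-lower-bound plan.

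However, your proof as written has a real error in the key separation step, which the uniformization cannot repair. You claim $I_\mu(x)>I_\nu(x)$ for all $x>m_\mu$. This is false on the interval $(m_\mu,m_\nu)$. For instance $\mu=N(0,1)$ and $\nu=N(1,1)$ satisfy (a)--(e) in the appropriate limiting sense, with $I_\mu(x)=x^2/2$ and $I_\nu(x)=(x-1)^2/2$, and these cross at $x=1/2\in(m_\mu,m_\nu)$; for $x<1/2$ one has $I_\mu(x)<I_\nu(x)$. Consequently the asserted uniform gap $\inf_{x\in J_+}\bigl(I_\mu(x)-I_\nu(x)\bigr)=3\eta>0$ cannot hold, since $J_+=[\bar m,\max\mu]$ meets $(m_\mu,m_\nu)$. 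What is actually true, and what the paper uses, is that for $m_\mu<x\leq m_\nu$ the Cram\'er upper-tail rate for $\nu$ is $0$ (the limit in \eqref{cramer-uppertail} is $0$, not $-\Lambda_\nu^*(x)$), so $\nu^{*n}[nx,\iy)$ stays bounded away from $0$ while $\mu^{*n}[nx,\iy)$ decays exponentially; the Legendre transform $I_\nu$ itself is strictly positive there and is the wrong object to compare. You acknowledge the ``near-mean'' regime at the very end, but your separation claim and gap estimate are stated over all of $J_+$ and would need to be restricted to $[m_\nu,\max\mu]$.

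There is also a logical slip in how you derive the separation even where it is true. For $x>m_\nu$ you should take the maximizer $\tilde\lambda>0$ of $I_\nu(x)$ and write $I_\nu(x)=\tilde\lambda x-\Lambda_\nu(\tilde\lambda)<\tilde\lambda x-\Lambda_\mu(\tilde\lambda)\leq I_\mu(x)$. You instead take the maximizer $\lambda^*$ of $I_\mu(x)$ and conclude $I_\mu(x)>\lambda^*x-\Lambda_\nu(\lambda^*)$; since the right-hand side is merely a lower bound for $I_\nu(x)$, ``taking the supremum over $\lambda$ on the right'' gives an inequality pointing the wrong way and does not yield $I_\mu(x)>I_\nu(x)$. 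Fix the choice of optimizer and restrict the rate-function comparison to $x\geq m_\nu$, handling $(m_\mu,m_\nu)$ separately as above, and the rest of your outline goes through.
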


We give in Proposition \ref{pr-notclosed} a counter-example showing that Proposition \ref{probabilisticlemma} is not true when stated with large inequalities.

\medskip

We are going to use Cramér's theorem on large deviations. The cumulant generating function 
$\Lambda_\mu$ of the probability measure $\mu$ is defined for any $\lambda \in 
\R$ by
\[ \Lambda_\mu(\lambda) = \log \int \el d\mu .\]
It is a convex function taking values in $\R$. Its convex conjugate 
$\Lambda_\mu^*$, sometimes called the Cramér transform,  is defined as
\begin{equation*} 
 \Lambda_\mu^*(t) = \sup_{\lambda \in \R} \lambda t - \Lambda_\mu(\lambda). 
\end{equation*}
Note that $\Lambda_\mu^*: \R \to [0, +\iy]$ is a smooth convex function, which takes the value $+\iy$ on $\R \setminus [\min \mu,\max \mu]$. 
Moreover, for $t \in (\min \mu,\max \mu)$, the supremum in the definition of 
$\Lambda_\mu^*(t)$ is attained at a unique point $\lambda_t$. Moreover, 
$\lambda_t > 0$ if $t > \int x d\mu(x)$ and $\lambda_t <0$ if $t < \int x d\mu(x)$. Also, $\Lambda_\mu^*(\int x d\mu(x)) = 0$ since $\Lambda_\mu'(0) = \int x d\mu(x)$.
We now state Cramér's theorem. The theorem can be equivalently stated in the language of sums of i.i.d. random variables \cite{dz,gs}.

\begin{theo*}[Cramér's theorem]
Let $\mu \in \PE$. Then for any $t \in \R$, 
\begin{equation} \lim_{n \to \iy} \frac{1}{n} \log \mu^{*n}[tn,+\iy)
 = \begin{cases} 0 & \textnormal{if } t \leq \int x d\mu(x)  \\ 
-\Lambda_X^*(t) & \textnormal{otherwise. } \end{cases} 
\label{cramer-uppertail} \end{equation}
\begin{equation} \lim_{n \to \iy} \frac{1}{n} \log \left(1-\mu^{*n} (tn,+\iy)
\right) = \begin{cases} 0 & \textnormal{if } t \geq \int x d\mu(x)  \\ 
-\Lambda_X^*(t) & \textnormal{otherwise. } \end{cases} 
\label{cramer-lowertail} \end{equation}
\end{theo*}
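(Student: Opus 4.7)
The plan is to prove this classical large-deviation statement by combining a Chernoff exponential Markov bound (for the upper direction) with an exponential tilting (Esscher transform) argument (for the matching lower direction). The two displayed equations are symmetric: applying the first to the image of $\mu$ under $x \mapsto -x$ yields the second, so I focus on (\ref{cramer-uppertail}). The case $t \leq \int x\, d\mu$ is handled separately and trivially: the probability $\mu^{*n}[tn,+\iy)$ is bounded by $1$ (so the $\limsup$ of $\frac{1}{n}\log$ is $\leq 0$), and by the law of large numbers it tends to $1$ when $t < \int x\, d\mu$ (and to $1/2$ by the central limit theorem when $t = \int x\, d\mu$). In either case $\frac{1}{n}\log \mu^{*n}[tn,+\iy) \to 0 = -\Lambda_\mu^*(t)$.

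For the nontrivial regime $t > \int x\, d\mu$, the upper bound comes from Markov's inequality applied, for $\lambda > 0$, to $\el(S_n)$ with $S_n$ of law $\mu^{*n}$:
\[ \mu^{*n}[tn,+\iy) \leq e^{-\lambda tn}\int \el\, d\mu^{*n} = \exp\bigl(n(\Lambda_\mu(\lambda) - \lambda t)\bigr). \]
Taking $\frac{1}{n}\log$ and optimizing over $\lambda > 0$ gives the upper bound $-\sup_{\lambda > 0}(\lambda t - \Lambda_\mu(\lambda))$, which by the excerpt equals $-\Lambda_\mu^*(t)$: whenever $t \in (\int x\, d\mu, \max \mu)$ the unconstrained supremum is attained at some $\lambda_t > 0$, so the restriction to $\lambda > 0$ is irrelevant.

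The hard step is the lower bound, which I would handle by tilting. Define the probability measure $\tilde\mu$ by $d\tilde\mu(x) = \exp(\lambda_t x - \Lambda_\mu(\lambda_t))\, d\mu(x)$. Differentiating $\Lambda_\mu$ at $\lambda_t$ shows $\int x\, d\tilde\mu = t$, and exponential integrability of $\mu$ passes to $\tilde\mu$. The standard change-of-measure identity for sums of i.i.d. variables gives, for any $\e > 0$,
\[ \mu^{*n}[tn,(t+\e)n] = \int_{[tn,(t+\e)n]} e^{-\lambda_t s + n\Lambda_\mu(\lambda_t)}\, d\tilde\mu^{*n}(s) \geq e^{-n(\Lambda_\mu^*(t) + \e\lambda_t)}\,\tilde\mu^{*n}[tn,(t+\e)n]. \]
Since $\tilde\mu$ has mean $t$, the law of large numbers forces $\tilde\mu^{*n}[tn,(t+\e)n]$ to remain bounded away from $0$ (by the central limit theorem it tends to $1/2$). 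Hence $\liminf_n \frac{1}{n}\log \mu^{*n}[tn,+\iy) \geq -\Lambda_\mu^*(t) - \e\lambda_t$, and letting $\e \to 0$ finishes the proof. The main obstacle is really this tilting construction — correctly identifying $\lambda_t$, checking that $\tilde\mu$ is a bona fide probability measure with the right mean, and controlling the exponential factor uniformly on $[tn,(t+\e)n]$ — after which the law of large numbers does the rest. The boundary cases $t = \max \mu$ (where $\Lambda_\mu^*(t) = -\log \mu\{\max \mu\}$ if $\max \mu$ is an atom) and $t > \max \mu$ (where both sides are $-\iy$) are verified directly and cause no difficulty.
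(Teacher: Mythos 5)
Your proposal is correct, but note that the paper gives no proof of this statement: Cram\'er's theorem is quoted as a classical result with a pointer to the references (Dembo--Zeitouni, Grimmett--Stirzaker), so there is no internal argument to compare against. What you wrote is exactly the standard proof from those sources --- the Chernoff bound $\mu^{*n}[tn,+\iy)\leq \exp\bigl(n(\Lambda_\mu(\lambda)-\lambda t)\bigr)$ optimized over $\lambda>0$ for the upper estimate, and the exponential tilt $d\tilde\mu(x)=e^{\lambda_t x-\Lambda_\mu(\lambda_t)}\,d\mu(x)$ combined with the law of large numbers for the matching lower estimate --- and your handling of the regime $t\leq\int x\,d\mu$, the reflection $x\mapsto -x$ reducing \eqref{cramer-lowertail} to \eqref{cramer-uppertail}, and the boundary cases $t\geq\max\mu$ is sound; the only caveat worth recording is that the CLT steps presuppose non-degenerate variance, but if $\mu$ is a Dirac mass the whole statement is immediate.
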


\begin{proof}[Proof of Proposition \ref{probabilisticlemma}]
Note that the hypotheses imply that the quantities $\max \mu$ and $\min \nu$ are finite.
We write also $M_\mu = \int x d\mu(x)$ and $M_\nu = \int x d\nu(x)$.
For $n\geq 1$, define $(f_n)$ and $(g_n)$ by
\[ f_n(t) = \mu^{*n} [tn,+\iy) , \]
\[ g_n(t) = \nu^{*n} [tn,+\iy) . \]
We need to prove that $f_n \leq g_n$ on $\R$ for $n$ large enough. If 
$t > \max \mu$, the inequality is trivial since $f_n(t) = 0$. Similarly, 
if $t < \min \nu$ we have $g_n(t) =1$ and there is nothing to prove.

Fix a real number $t_0$ such that $M_\mu < t_0 < M_\nu$. We first work on 
the interval $I = [t_0, \max \mu]$. By Cramér's theorem, the sequences $(f_n^{1/n})$ and 
$(g_n^{1/n})$ converge respectively on $I$ toward $f$ and $g$ defined 
by
\[ f(t) = \exp (-\Lambda_\mu^*(t)) ,\] 
\[ g(t) = \begin{cases} 1 & \textnormal{if } t_0 \leq t \leq M_\nu \\ 
\exp(-\Lambda_\nu^*(t)) & \textnormal{if }  M_\nu \leq t \leq \max \mu .
\end{cases} \]
Note that $f$ and $g$ are continuous on $I$. We claim also that $f < g$ 
on $I$. The inequality is clear on $[t_0,M_\nu]$ since $f<1$. If $t \in 
(M_\nu,\max \mu]$, note that the supremum in the definition of 
$\Lambda_\nu^*(t)$ is attained for some $\lambda >0$ --- to show this we 
used hypothesis (d). Using (a) and the definition of the convex 
conjugate, it implies that $\Lambda^*_\nu(t) > \Lambda^*_\mu(t)$. 
We now use the following elementary fact: if a sequence of 
non-increasing functions defined on a compact interval $I$ converges 
pointwise toward a continuous limit, then the convergence is actually 
uniform on $I$ (for a proof see \cite{ps} Part 2, Problem 127; this 
statement is attributed to P\'olya or to Dini depending on authors). We 
apply this result to both $(f_n^{1/n})$ and $(g_n^{1/n})$ ; and 
since $f<g$, uniform 
convergence implies that for $n$ large enough, $f_n^{1/n} < g_n^{1/n}$ 
on $I$, and thus $f_n \leq g_n$.

Finally, we apply a similar argument on the interval $J = [\min 
\nu,t_0]$, except that we consider the sequences $(1-f_n)^{1/n}$ and 
$(1-g_n)^{1/n}$, and we use \eqref{cramer-lowertail} to compute the 
limit. We omit the details since the argument is totally symmetric.

We eventually showed that for $n$ large enough, $f_n \leq g_n$ on $I 
\cup J$, and thus on $\R$. This is exactly the conclusion of the proposition.
\end{proof}

\section{Geometry and topology of $\Mstochprec$}\label{sec:top_geom}

We investigate here the topology of the relation $\Mstochprec$. We first need to define a adequate topology on $\PE$. This space can be topologized in several ways, an important point for us being that the map $\mu \mapsto \int \el d\mu$ should be continuous.

\begin{defn}
A function $f : \R \to \R$ is said to be subexponential if there exist constants $c,C$ so that for every $x \in \R$
\[ |f(x)| \leq C \exp (c|x|). \]
\end{defn}

\begin{defn}
Let $\tau$ be the topology defined on the space of exponentially integrable measures, generated by the family of seminorms $(N_f)$
\[ N_f(\mu) = \left|\int f d\mu\right| ,\]
where $f$ belongs to the class of continuous subexponential functions.
\end{defn}

The topology $\tau$ is a locally convex vector space topology. It can be shown that the relation $\Mstochprec$ is not $\tau$-closed (see Proposition \ref{pr-notclosed}). However, we can give a functional characterization of its closure. This is the content of the following theorem.

\begin{theo} \label{theo-measures}
Let $R \subset \PE^2$ be the set of couples $(\mu,\nu)$ of exponentially integrable probability measures so that $\mu \Mstochprec \nu$. Then
\begin{equation} \label{closure}
\overline{R} = \left\{ (\mu,\nu) \in \PE^2 \st \forall \lambda \geq 0, \int \el d\mu \leq \int \el d\nu  \text{ and } \forall \lambda \leq 0, \int \el d\mu \geq \int \el d\nu \right\} ,\end{equation}
the closure being taken with respect to the topology $\tau$.
\end{theo}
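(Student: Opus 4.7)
Denote by $S$ the right-hand side of \eqref{closure}. The inclusion $\overline R \subset S$ is essentially formal: each map $\mu \mapsto \int \el\,d\mu$ is $\tau$-continuous (since $\el$ is continuous and subexponential), so $S$ is $\tau$-closed, and Lemma \ref{lemma-easydirection} gives $R \subset S$; hence $\overline R \subset S$.

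For the reverse inclusion $S \subset \overline R$, given $(\mu,\nu) \in S$ I will produce approximants $(\mu_{T,\e},\nu_{T,\e}) \in R$ via two successive perturbations. The naive plan --- perturb $\mu$ and $\nu$ slightly to get strict inequalities, then invoke Proposition \ref{probabilisticlemma} --- breaks down when $\max \mu = +\iy$ or $\min \nu = -\iy$, since the hypotheses (d)--(e) of that proposition cannot then be realised. The first step is therefore a \emph{truncation}: let $\tilde\mu_T$ be the pushforward of $\mu$ under $x \mapsto \min(x,T)$, and $\tilde\nu_T$ the pushforward of $\nu$ under $x \mapsto \max(x,-T)$. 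Comparing $e^{\lambda T}$ with $e^{\lambda x}$ for $x \geq T$ (respectively $e^{-\lambda T}$ with $e^{\lambda x}$ for $x \leq -T$) according to the sign of $\lambda$ gives
\[ \int \el\,d\tilde\mu_T \leq \int \el\,d\mu \quad\text{and}\quad \int \el\,d\tilde\nu_T \geq \int \el\,d\nu \qquad\text{for }\lambda > 0, \]
with the reverse inequalities for $\lambda < 0$. Hence $(\tilde\mu_T,\tilde\nu_T)$ still lies in $S$, but now $\max \tilde\mu_T \leq T$ and $\min \tilde\nu_T \geq -T$ are finite. Dominated convergence, using exponential integrability to control the boundary mass at $\pm T$, yields $\tilde\mu_T \to \mu$ and $\tilde\nu_T \to \nu$ in $\tau$.

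The second step is a \emph{convolution perturbation}. Let $\rho_\e^-$ be the uniform probability measure on $[-\e,0]$ and $\rho_\e^+$ the uniform on $[0,\e]$; set $\mu_{T,\e} := \tilde\mu_T * \rho_\e^-$ and $\nu_{T,\e} := \tilde\nu_T * \rho_\e^+$. Since $\int \el\,d\rho_\e^- < 1 < \int \el\,d\rho_\e^+$ for $\lambda > 0$ (and the reverse for $\lambda < 0$), multiplicativity of exponential moments under convolution upgrades hypotheses (a)--(b) of Proposition \ref{probabilisticlemma} to strict inequalities for $(\mu_{T,\e},\nu_{T,\e})$; (c) is obtained from the opposite mean shifts $\mp \e/2$, and (d)--(e) from $\max \mu \leq \max \nu$ and $\min \mu \leq \min \nu$ (got by sending $\lambda \to \pm \iy$ in the defining inequalities of $S$) together with the $\pm \e$ displacement of the supports. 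Proposition \ref{probabilisticlemma} then yields $\mu_{T,\e} \Mstochprec \nu_{T,\e}$, so $(\mu_{T,\e},\nu_{T,\e}) \in R$, and a standard diagonal argument letting $T \to \iy$ and $\e \to 0$ completes the approximation.

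The delicate point is the direction of truncation: capping $\mu$ from above and $\nu$ from below is exactly what preserves the inequalities defining $S$ --- the opposite choice would ruin them. Once this asymmetry is in place, everything else is routine verification.
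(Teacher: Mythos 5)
Your proof is correct and follows essentially the same strategy as the paper's: establish $\overline R \subset S$ from Lemma~\ref{lemma-easydirection} and $\tau$-continuity of $\mu\mapsto\int\el\,d\mu$, then approximate any $(\mu,\nu)\in S$ by truncated, slightly perturbed pairs that satisfy the strict hypotheses of Proposition~\ref{probabilisticlemma}. The only genuine difference is in the second step: the paper truncates $\mu$ from above and $\nu$ from below with small Dirac masses at $\pm n$, which already yields strict inequalities in the doubly-unbounded case $\max\mu=\max\nu=+\iy$, $\min\mu=\min\nu=-\iy$, and then simply remarks that the remaining cases can be handled by ``playing with small Dirac masses.'' Your convolution with the uniform measures $\rho_\e^{\mp}$ is a cleaner, case-free way to force all five strict inequalities at once after the truncation; it fills in precisely the case analysis the paper elides, at the small cost of one extra layer of limits ($T\to\iy$ then $\e\to 0$) to be handled by the diagonal argument you invoke. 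Both routes lean on the same key facts (multiplicativity of $\int\el$ under convolution, the observations $\int x\,d\mu\leq\int x\,d\nu$, $\min\mu\leq\min\nu$, $\max\mu\leq\max\nu$ extracted from the defining inequalities of $S$, and dominated convergence for $\tau$-continuity), so this is an improvement in exposition rather than a different proof.
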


\begin{proof}
Let us write $X$ for the set on the right-hand side of \eqref{closure}. We get from Lemma \ref{lemma-easydirection} that $R \subset X$. Moreover, it is easily checked that $X$ is $\tau$-closed, therefore $\overline{R} \subset X$. Conversely, we are going to show that the set of couples $(\mu,\nu)$ satisfying the hypotheses of Proposition \ref{probabilisticlemma} is $\tau$-dense in $X$. Let $(\mu,\nu) \in X$. We get from the inequalities satisfied by $\mu$ and $\nu$ that
\begin{itemize}
 \item $\int x d\mu(x) \leq xd\nu (x)$ (taking derivatives at $\lambda=0$),
 \item $\min \mu \leq \min \nu$ (taking $\lambda \to -\iy$),
 \item $\max \mu \leq \max \nu$ (taking $\lambda \to +\iy$).
\end{itemize}
We want to define two sequences $(\mu_n,\nu_n)$ which $\tau$-converge toward $(\mu,\nu)$, with $\mu_n \stochprec \mu$ and $\nu \stochprec \nu_n$ and for which the above inequalities become strict. Assume for example that $\max \mu = \max \nu = +\iy$ and $\min \mu = \min \nu = -\iy$. Then we can define $\mu_n$ and $\nu_n$ as follows: let $\e_n = \mu[n,+\iy)$ and $\eta_n=\nu(-\iy,-n]$, and set
\[ \mu_n = \mu_{|(-\iy,n)} + \e_n \delta_{n} ,\]
\[ \nu_n = \nu_{|(-n,+\iy)} + \eta_n \delta_{-n}. \]
We check using dominated convergence than $\lim \mu_n = \mu$ and $\lim \nu_n=\nu$ with respect to $\tau$, while by Proposition \ref{probabilisticlemma} we have $\mu_n \Mstochprec \nu_n$. The other cases are treated in a similar way: we can always play with small Dirac masses to make all inequalities strict (for example, if $\max \mu=\max \nu = M < +\iy$, replace $\nu$ by $(1-\e)\nu + \e \delta_{M+1}$, and so on). 
\end{proof}

A more comfortable way of describing the relation $\Mstochprec$ is given by the following sets

\begin{defn}
 Let $\nu \in \PE$. We define $D(\nu)$ to be the following set
\[ D(\nu) = \{ \mu \in \PE \st \mu \Mstochprec \nu \}. \]
\end{defn}

Using the ideas in the proof of Theorem \ref{theo-measures}, it can easily be showed that for $\nu \in \PE$ such that $\min \nu > - \iy$, one has 
\begin{equation}\label{eq:D_nu}
\overline{D(\nu)} = \left\{ \mu \in \PE \st \forall \lambda \geq 0, \int \el d\mu \leq \int \el d\nu  \text{ and } \forall \lambda \leq 0, \int \el d\mu \geq \int \el d\nu \right\},
\end{equation}
where the closure is taken in the topology $\tau$. However, for measures $\nu$ with $\min \nu = -\iy$, the condition (e) of Proposition \ref{probabilisticlemma} is violated and we do not know if the relation \eqref{eq:D_nu} holds. 

Another consequence of equation \eqref{eq:D_nu} is that the $\tau$-closure of $D(\nu)$ is a convex set. It is not clear that the set $D(\nu)$ itself is convex. We shall see in Proposition \ref{pr-notconvex} that this is not the case in general for measures $\nu \notin \PE$. Not also that for fixed $\nu \in \Prob(\R)$ the set $\{ \mu \in \Prob(\R) \st \mu \stochprec \nu \}$ is easily checked to be convex.

%

\begin{rk}
 One can analogously define for $\mu \in \PE$ the ``dual'' set 
\[ E(\mu) = \{ \nu \in \PE \st \mu \Mstochprec \nu \} .\]
Results about $D(\nu)$ or $E(\mu)$ are equivalent. Indeed, let $\mu^\leftrightarrow$ be the measure defined for a Borel set $B$ by $\mu^\leftrightarrow(B)=\mu(-B)$. We have $\mu \Mstochprec \nu \iff \nu^\leftrightarrow \Mstochprec \mu^\leftrightarrow$ and therefore $E(\mu)=D(\mu^\leftrightarrow)^\leftrightarrow$.
\end{rk}

We now give an example showing that the relation $\Mstochprec$ is not $\tau$-closed.

\begin{pr}
\label{pr-notclosed}
There exists a probability measure $\nu \in \PE$ so that the set $D(\nu)$ is not $\tau$-closed. Consequently, the set $R$ appearing in \eqref{closure} is not closed either.
\end{pr}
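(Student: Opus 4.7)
The plan is to exhibit a pair $\mu,\nu\in\PE$ with $\mu\ne\nu$ such that (i) the Laplace-transform inequalities of \eqref{eq:D_nu} hold, placing $\mu$ in $\overline{D(\nu)}$, and (ii) $\mu$ and $\nu$ have the same mean. The obstruction preventing $\mu\in D(\nu)$ will then come from the following general lemma, which I would isolate first: \emph{if $\mu,\nu\in\PE$ satisfy $\mu\ne\nu$ and $\int x\,d\mu=\int x\,d\nu$, then $\mu\not\Mstochprec\nu$.}

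To prove this lemma, assume $\mu^{*n}\stochprec\nu^{*n}$ for some $n$. Their means coincide (both equal $n\int x\,d\mu$), and for probability measures $\alpha,\beta$ on $\R$ with $\alpha\stochprec\beta$ and common finite mean, the identity
\[
\int x\,d\beta-\int x\,d\alpha=\int_{\R}\bigl(\beta[t,\infty)-\alpha[t,\infty)\bigr)\,dt=0,
\]
together with the non-negativity of the integrand, forces $\alpha=\beta$. Thus $\mu^{*n}=\nu^{*n}$, and the Carleman moment-determinacy argument used in the proof that $\Mstochprec$ is a partial order on $\PE$ yields $\mu=\nu$, a contradiction.

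For the construction I would match first \emph{and} second moments while arranging $m_3(\mu)<m_3(\nu)$, so that $h(\lambda):=\int\el\,d\nu-\int\el\,d\mu$ has expansion $\frac{m_3(\nu)-m_3(\mu)}{6}\lambda^3+O(\lambda^4)$ near $0$, which is exactly the sign pattern required by \eqref{eq:D_nu}. A concrete candidate is
\[
\mu=\tfrac12\delta_{-1}+\tfrac12\delta_{1},\qquad\nu=\tfrac23\delta_{-1/\sqrt{2}}+\tfrac13\delta_{\sqrt{2}},
\]
both compactly supported (hence in $\PE$), both of mean $0$ and variance $1$, with $m_3(\mu)=0<\sqrt{2}/2=m_3(\nu)$.

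The main obstacle is verifying the Laplace inequalities \emph{globally}, not only near $0$ and at $\pm\infty$. Near $0$ they follow from the moment expansion. As $\lambda\to+\infty$, $\int\el\,d\nu\sim\tfrac13 e^{\sqrt{2}\lambda}$ dominates $\int\el\,d\mu\sim\tfrac12 e^{\lambda}$ because $\sqrt{2}>1$, and symmetrically $\int\el\,d\mu$ beats $\int\el\,d\nu$ at $-\infty$. To rule out a wrong-sign crossing in the interior I would study $h'(\lambda)=\tfrac{\sqrt{2}}{3}(e^{\sqrt{2}\lambda}-e^{-\lambda/\sqrt{2}})-\sinh\lambda$, showing $h'>0$ on $\R\setminus\{0\}$ so that $h$ is strictly increasing with $h(0)=0$; this reduces to elementary estimates on explicit combinations of exponentials. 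Once the Laplace inequalities are established, the pair $(\mu,\nu)$ satisfies $\mu\in\overline{D(\nu)}$ by \eqref{eq:D_nu} while $\mu\notin D(\nu)$ by the lemma, so $D(\nu)$ is not $\tau$-closed and, projecting out the $\nu$-coordinate, $R$ is not closed either.
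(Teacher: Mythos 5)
Your proposal is correct and takes a genuinely different route from the paper. The paper builds $\mu$ and $\nu$ by superposing infinitely many rescaled copies of the two-point examples from Proposition~\ref{2points} at widely separated scales, then carefully analyzes supports to show $\mu\in\overline{D(\nu)}\setminus D(\nu)$. You instead isolate a clean obstruction lemma --- if $\mu,\nu\in\PE$ have equal means and $\mu\neq\nu$, then $\mu\not\Mstochprec\nu$ --- whose proof combines the elementary integral identity $\int x\,d\beta-\int x\,d\alpha=\int_\R(\beta[t,\iy)-\alpha[t,\iy))\,dt$ with the Carleman moment-determinacy argument the paper already uses for antisymmetry. This turns the problem into exhibiting distinct compactly supported laws with equal mean (hence, as your Taylor expansion shows, necessarily equal variance) satisfying the Laplace inequalities of~\eqref{eq:D_nu}, which forces $\mu\in\overline{D(\nu)}$ while the lemma forbids $\mu\in D(\nu)$. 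Two remarks to tighten your sketch. First, the global Laplace inequality for your pair can be verified without case analysis: writing $L_p(\lambda)=p\,e^{\lambda\sqrt{(1-p)/p}}+(1-p)\,e^{-\lambda\sqrt{p/(1-p)}}$ for the centered variance-one two-point law with parameter $p$ (so $\mu=L_{1/2}$, $\nu=L_{1/3}$), one computes
\[
\partial_p L_p(\lambda)=\bigl(e^{\lambda a}+e^{-\lambda b}\bigr)\left(\tanh\frac{\lambda}{2\sqrt{p(1-p)}}-\frac{\lambda}{2\sqrt{p(1-p)}}\right),\qquad a=\sqrt{\tfrac{1-p}{p}},\ b=\tfrac1a,
\]
which has sign opposite to $\lambda$ by the inequality $\tanh x<x$ for $x>0$; integrating from $p=1/3$ to $p=1/2$ gives exactly the required sign pattern. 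Second, you invoke~\eqref{eq:D_nu}, which the paper only states as a remark; to be fully self-contained, replace it by a direct perturbation: $\mu_n=(1-\tfrac1n)\mu+\tfrac1n\delta_{-2}$ satisfies the \emph{strict} hypotheses (a)--(e) of Proposition~\ref{probabilisticlemma} against $\nu$ (the strict Laplace inequalities you established, $\int x\,d\mu_n=-2/n<0$, $\min\mu_n=-2<-1/\sqrt2$, $\max\mu_n=1<\sqrt2$), so $\mu_n\in D(\nu)$ and $\mu_n\to\mu$ in $\tau$. What your approach buys is a short, explicit two-and-three-point counterexample and a reusable structural lemma; what the paper's approach buys is independence from~\eqref{eq:D_nu} and an illustration of how badly the minimal $n$ in Proposition~\ref{2points} can degrade under mixtures.
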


\begin{proof}
Let us start with a simplified sketch of the proof. By the examples of Section \ref{sec:stoch_dom}, for each positive integer $k$, one can find probability measures $\mu_k$ and $\nu_k$ such that $\mu_k \in D(\nu_k)$, while $\mu_k^{*k} \not \stochprec \nu_k^{*k}$. We sum properly rescaled and normalized versions of these measures in order to obtain two probability measures $\mu$ and $\nu$ such that $\mu \notin D(\nu)$. However, successive approximations $\tilde{\mu}_n$ of $\mu$ are shown to satisfy $\tilde{\mu}_n \stochprec \nu$ which implies $\mu \in \overline{D(\nu)}$ and thus $D(\nu) \neq \overline{D(\nu)}$.

We now work out the details. For $k \geq 1$, let $a_k = (k+2)!$, $b_k = (k+2)!+1$ and $\gamma_k = c \exp(-k^k)$, where the constant $c$ is chosen so that $\sum \gamma_k =1$.  We check that $(a_k)$ and $(b_k)$ satisfy the following inequalities
\begin{equation} \label{ineq1}
 (k-1) b_k + b_{k-1} < k a_k,
\end{equation}
\begin{equation} \label{ineq2}
 k b_k < a_{k+1}.
\end{equation}
It follows from Proposition \ref{2points} that for each $k \in \N$ there exist $\mu_k$ and $\nu_k$, probability measures with compact support such that $\mu_k \in D(\nu_k)$ while $\mu_k^{*k} \not \stochprec \nu_k^{*k}$. Moreover, we can assume that $\supp(\mu_k) \subset (a_k,b_k)$ and $\supp(\nu_k) \subset (a_k,b_k)$. Indeed, we can apply to both measures a suitable affine transformation (increasing affine transformations preserve stochastic domination and are compatible with convolution). We now define $\mu$ and $\nu$ as
\[ \mu = \sum_{k=1}^{\iy} \gamma_k \mu_k \ \ \ \ \textnormal{ and } \ \ \ \ \nu = \sum_{k=1}^{\iy} \gamma_k \nu_k .\]
Note that the sequence $(\gamma_k)$ has been chosen to tend very quickly to 0 to ensure that $\mu$ and $\nu$ are exponentially integrable.
We also introduce the following sequences of measures
\[ \tilde{\mu}_n = \sum_{k=1}^n \gamma_k \mu_k + \left( \sum_{k=n+1}^\iy \gamma_k\right) \delta_0 ,\]
\[ \tilde{\nu}_n = \sum_{k=1}^n \gamma_k \nu_k + \left( \sum_{k=n+1}^\iy \gamma_k\right) \delta_0 .\]
One checks using Lebesgue's dominated convergence theorem that the sequences $(\tilde{\mu}_n)$ and $(\tilde{\nu}_n)$ converge respectively toward $\mu$ and $\nu$ for the topology $\tau$. Note also that this sequences are increasing with respect to stochastic domination, so that $\tilde{\nu}_n \stochprec \nu$. For fixed $k$, $\mu_k$ and $\nu_k$ satisfy the hypotheses of Proposition \ref{probabilisticlemma} and thus the same holds for $\tilde{\mu}_n$ and $\tilde{\nu}_n$. Therefore $\tilde{\mu}_n \in D(\tilde{\nu}_n) \subset D(\nu)$. This proves that $\mu \in \overline{D(\nu)}$.

We now prove by contradiction that $\mu \notin D(\nu)$. Assume that $\mu \in D(\nu)$, i.e.  $\mu^{*k} \stochprec \nu^{*k}$ for some $k \geq 1$. Let $s_k=ka_k$ and $t_k=kb_k$.
Fix a sequence $i_1,\dots,i_k$ of nonzero integers. Set $m = \mu_{i_1} * \dots * \mu_{i_k}$ or $m = \nu_{i_1} * \dots * \nu_{i_k}$. We know that $\supp (m) \subset (a,b)$, with $a=\sum_{j=1}^k a_{i_j}$ and $b=\sum_{j=1}^k b_{i_j}$.
It is possible to locate precisely $\supp(m)$ using the inequalities \eqref{ineq1} and \eqref{ineq2}.
\begin{enumerate}
 \item[(a)] If $i_j>k$ for some $j$, then $a \geq a_{k+1} > t_k$ and therefore $\supp(m) \subset (t_k,+\iy)$.
 \item[(b)] If $i_j=k$ for all $j$, then $a = s_k$ and $b = t_k$ and therefore $\supp(m) \subset (s_k,t_k)$.
 \item[(c)] If $i_j \leq k$ for all $j$ and $i_{j_0} < k$ for some $j_0$, then $b \leq b_{k-1}+(k-1)b_k < s_k$
 and therefore $\supp(m) \subset [0,s_k)$.
\end{enumerate}
Consequently,
\[ \mu^{*k}[t_k,+\iy) = \sum_{i_1,\dots,i_k} \gamma_{i_1}\dots\gamma_{i_k} \mu_{i_1}*\dots*\mu_{i_k}[t_k,+\iy) = \sum_{i_1,\dots,i_k \textnormal{ satisfying (a)}} \gamma_{i_1} \dots \gamma_{i_k} = \nu^{*k}[t_k,+\iy) .\]
Moreover, because of (b) and (c), we get that for $s_k\leq t \leq t_k$,
\[ \mu^{*k}[t,t_k) = \gamma_k^k \mu_k^{*k}[t,t_k) = \gamma_k^k \mu_k^{*k}[t,+\iy) .\]
and similarly
\[ \nu^{*k}[t,t_k) = \gamma_k^k \nu_k^{*k}[t, +\iy). \]

We assumed that $\mu^{*k} \stochprec \nu^{*k}$, i.e. $\mu^{*k}[t,+\iy) \leq \nu^{*k}[t,+\iy)$ for all $t$. If $t \leq t_k$, since $\mu^{*k}(t_k,+\iy) = \nu^{*k}(t_k,+\iy)$, we get that $\mu^{*k}[t,t_k) \leq \nu^{*k}[t,t_k)$. Since $\gamma_k>0$, this implies that for all $t \geq s_k$, $\mu_k^{*k}[t,+\iy) \leq \nu_k^{*k}[t,+\iy)$. This contradicts the fact that $\mu_k^{*k} \not \stochprec \nu_k^{*k}$. Therefore $\mu \in \overline{D(\nu)} \setminus D(\nu)$, and so $D(\nu)$ is not closed.
\end{proof}

We now give an example of what can happen if we consider measures with poor integrability properties.

\begin{pr} \label{pr-notconvex}
There exists a probability measure $\nu \in \Prob(\R)$ such that the set 
\begin{equation} \label{notPE} \{ \mu \in \Prob(\R) \st \mu \Mstochprec \nu \} \end{equation}
is not convex.
\end{pr}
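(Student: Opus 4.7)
The non-convexity must come from $\nu \notin \PE$: when $\nu \in \PE$, the necessary conditions (a)--(e) of Lemma~\ref{lemma-easydirection} are all \emph{linear} (or convex) in $\mu$, and in fact the arguments leading to \eqref{eq:D_nu} show that $\overline{D(\nu)}$ is characterized by such linear inequalities and is therefore convex. So any counterexample must be a measure $\nu$ for which some integral $\int \el\, d\nu$ ($\lambda > 0$) is infinite, making the corresponding constraint vacuous and allowing $D(\nu)$ to have a richer geometry.

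My plan is to adapt the scale-mixture construction used in the proof of Proposition~\ref{pr-notclosed}. Take the obstruction pairs $(\alpha_k, \beta_k)$ produced by Proposition~\ref{2points}, rescaled and translated into disjoint windows $(a_k, b_k)$ satisfying the separation inequalities \eqref{ineq1}--\eqref{ineq2}. Choose weights $(\gamma_k)$ that decay only polynomially (for instance $\gamma_k \sim 1/k^2$), so that $\nu := \sum_k \gamma_k \beta_k$ is no longer exponentially integrable. Now partition $\N$ into two infinite pieces $S_1 \sqcup S_2$ (say even and odd indices) and define
\[
\mu^{(i)} := \sum_{k \in S_i} \gamma_k\alpha_k \;+\; c_i\, \delta_{x_i}, \qquad i=1,2,
\]
where $c_i = 1 - \sum_{k\in S_i}\gamma_k$ and $x_i$ is chosen deeply in the left tail so that $\delta_{x_i}$ is trivially dominated by $\nu^{*n}$ for every $n$. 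The convex combination $\tfrac12(\mu^{(1)}+\mu^{(2)})$ then equals (up to a harmless Dirac correction that can be absorbed by a small adjustment of the setup) the measure $\mu = \sum_k \gamma_k \alpha_k$ appearing in the proof of Proposition~\ref{pr-notclosed}.

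By exactly the argument of that proposition --- the scale-separation inequalities guarantee that for any candidate $n$, the tail $\mu^{*n}[t_n,+\iy)$ inside the scale-$n$ window reduces to $\gamma_n^n\alpha_n^{*n}[t,+\iy)$ and similarly for $\nu^{*n}$, so $\mu^{*n} \stochprec \nu^{*n}$ would force $\alpha_n^{*n}\stochprec \beta_n^{*n}$, contradicting the defining property of the pair $(\alpha_n,\beta_n)$ --- the average is not in $D(\nu)$. This gives the desired non-convexity once we check $\mu^{(1)},\mu^{(2)} \in D(\nu)$.

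The main obstacle is precisely this last verification. Because $\mu^{(i)}$ uses only half the scales, the relative weight on each block it does keep is the same as in $\nu$, but the other scales are missing and have been replaced by a single Dirac $c_i\delta_{x_i}$. The key point is that in the expansion of $(\mu^{(i)})^{*n}$, the only terms falling in the scale-$k$ window for $k\in S_i$ come from the all-equal tuple $(k,k,\ldots,k)$, and this term also appears in $\nu^{*n}$ with identical weight $\gamma_k^n$, so the window-wise comparison $\alpha_k^{*n}\stochprec \beta_k^{*n}$ reduces to the defining property of $(\alpha_k,\beta_k)$ and is achieved by choosing $n$ to be a common multiple of the indices in $S_i$ (or aligned so as to beat each $(\alpha_k,\beta_k)$ individually). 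All remaining cross-terms land in the gaps between windows, where $\nu^{*n}$ inherits mass from the scales in $S_{3-i}$ missing from $\mu^{(i)}$, providing slack that dominates both the corresponding cross-terms from $\mu^{(i)}$ and the contribution of $\delta_{x_i}$. The careful bookkeeping of these cross-terms and of the Dirac compensator is the technical heart of the proof.
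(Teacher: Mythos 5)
Your approach is genuinely different from the paper's, but it has gaps that are not just ``technical bookkeeping'' --- two of them look fatal as stated, and the third is a real missing idea that you flag but do not fill.

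First, the algebra of the average is off. With $\mu^{(i)}=\sum_{k\in S_i}\gamma_k\alpha_k+c_i\delta_{x_i}$ and $c_i=1-\sum_{k\in S_i}\gamma_k$, the midpoint $\tfrac12(\mu^{(1)}+\mu^{(2)})$ puts mass $\gamma_k/2$ (not $\gamma_k$) on each scale, plus total Dirac mass $\tfrac12(c_1+c_2)=\tfrac12$. This is \emph{not} the measure $\sum_k\gamma_k\alpha_k$ up to a harmless correction. Consequently the non-membership argument of Proposition~\ref{pr-notclosed} does not transfer: the key identity there, $\mu^{*k}[t_k,\iy)=\nu^{*k}[t_k,\iy)$, relied on the $\gamma$-weights agreeing; with half-weights it becomes a strict inequality, and that slack could swallow the scale-$k$ obstruction. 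If instead you double the scale weights in $\mu^{(i)}$ to fix the average, then in the scale-$k$ window $(\mu^{(i)})^{*n}$ carries mass of order $(2\gamma_k)^n$ against $\gamma_k^n$ for $\nu^{*n}$, a factor $2^n$ in the wrong direction, and the claim $\mu^{(i)}\in D(\nu)$ is dead on arrival.

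Second, even ignoring the averaging issue, the membership $\mu^{(i)}\in D(\nu)$ is not established and is not a routine verification. For any fixed $n$ there are infinitely many $k\in S_i$ with $k>n$, and for each of them the defining property of $(\alpha_k,\beta_k)$ from Proposition~\ref{2points} gives $\alpha_k^{*n}\not\stochprec\beta_k^{*n}$. In the scale-$k$ window the only contributions from all-$S_i$ tuples are $\gamma_k^n\alpha_k^{*n}$ and $\gamma_k^n\beta_k^{*n}$, so the local comparison fails there; your only hope is that the extra tail mass of $\nu^{*n}$ coming from tuples using scales in $S_{3-i}$ above level $k$ compensates. Whether that slack suffices is a genuinely quantitative question depending on the decay of $(\gamma_k)$ and on the size of the violation $\sup_t(\alpha_k^{*n}[t,\iy)-\beta_k^{*n}[t,\iy))$, and you have not carried it out; ``choosing $n$ to be a common multiple of the indices in $S_i$'' is impossible since $S_i$ is infinite.

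For comparison, the paper's proof is short and of an entirely different flavor. It invokes the classical fact (Feller, p.~479) that there exist two \emph{distinct} real characteristic functions with $\phi_1^2=\phi_2^2$; letting $\mu,\nu$ have these as characteristic functions, one has $\mu^{*2}=\nu^{*2}$, so both $\mu$ and $\nu$ lie in $D(\nu)$. The midpoint $\chi=\tfrac12(\mu+\nu)$ satisfies $\chi^{*2n}\stochprec\nu^{*2n}$ iff $\nu^{*(2n-1)}*\mu\stochprec\nu^{*2n}$; both of these are symmetric measures (real characteristic functions) and are distinct, and two distinct symmetric measures can never be $\stochprec$-comparable, so $\chi\notin D(\nu)$. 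If you want to pursue your construction, the first thing to settle is a corrected definition of $\mu^{(1)},\mu^{(2)}$ making the averaging identity exact, and then a quantitative estimate on the above-window slack $B-A$ versus $\gamma_k^n$ for $k\in S_i$, $k>n$.
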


The difference between equation \eqref{notPE} and our definition of $D(\nu)$ is that here we do not suppose the measures to be exponentially integrable.

\begin{proof}
We rely on the following fact which we already alluded to (see \cite{feller}, p. 479): there exist two distinct real characteristic functions $\phi_1$ and $\phi_2$ such that $\phi_1^2 = \phi_2^2$ identically.
Consider now the measures $\mu$ and $\nu$ with respective characteristic functions $\phi_1$ and  $\phi_2$, i.e. $\phi_1(t) = \int e^{it} d\mu(t)$ and $\phi_2(t) = \int e^{it} d\nu(t)$. Obviously, we have $\nu \Mstochprec \nu$ and $\mu \Mstochprec \nu$ since $\mu^{*2} = \nu^{*2}$. Let $\chi = \frac 1 2 \mu + \frac 1 2 \nu$ and let us show that $\chi  \not\Mstochprec \nu$. We have
\[\chi^{*2n} = \frac{1}{2^{2n}} \sum_{i=0}^{2n}{2n \choose i} \mu^{*i} *  \nu^{*2n-i} =  \]
\[ = \frac{1}{2^{2n}} \left[ \sum_{i \textnormal{ even}}{{2n \choose i} \nu^{*2n}} + \sum_{i \textnormal{ odd}}{{2n \choose i} \nu^{*2n-1} *\mu} \right]. \]
Thus $\chi^{*2n} \stochprec \nu^{*2n}$, is equivalent to $\nu^{*2n-1} *\mu \stochprec \nu^{*2n}$. Let us show that this is impossible. Indeed, the measures $\nu^{*2n-1} *\mu$ and $\nu^{*2n}$ have real characteristic functions and thus they are symmetric probability measures. Note however that two symmetric probability distributions cannot be compared with $\stochprec$ unless they are equal. But it cannot be that $\nu^{*2n-1} *\mu = \nu^{*2n}$ because their characteristic functions are different ($\phi_1(\xi) = \phi_2(\xi)$ iff. $\phi_1(\xi) = 0$). A similar argument holds for  $\chi^{*2n+1} \notstochprec \nu^{*2n+1}$.
\end{proof}

We conclude this section with few remarks on a relation which is very similar to $\Mstochprec$. It is the analogue of catalytic majorization in quantum information theory (see Section \ref{sec:quantum}).

\begin{defn}
Let $\mu,\nu \in \PE$. We say that $\mu$ is catalytically stochastically dominated by $\nu$ and write $\mu \Cstochprec \nu$ if there exists a probability measure $\pi \in \PE$ such that $\mu*\pi \stochprec \nu*\pi$. 
\end{defn}

The following lemma shows a connection between the two relations.

\begin{lemma} \label{catalysis-measures}
Let $\mu,\nu \in \PE$. Assume $\mu \Mstochprec \nu$. Then $\mu \Cstochprec \nu$. 
\end{lemma}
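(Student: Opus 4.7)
The plan is to exhibit an explicit catalyst built from the mixture of convolution powers of $\mu$ and $\nu$. By hypothesis there exists $n\geq 1$ with $\mu^{*n} \stochprec \nu^{*n}$; fix such an $n$ and set
\[ \pi = \frac{1}{n} \sum_{k=0}^{n-1} \mu^{*k} * \nu^{*(n-1-k)} . \]
This is a convex combination of convolutions of exponentially integrable measures, so $\pi \in \PE$ since $\PE$ is stable under both convolution and convex combinations.

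The key computation is a telescoping identity. Convolving $\pi$ with $\mu$ shifts the exponent of $\mu$ up by one in each term, while convolving with $\nu$ does the same for $\nu$; hence after re-indexing,
\[ \nu * \pi - \mu * \pi = \frac{1}{n}\bigl(\nu^{*n} - \mu^{*n}\bigr) \]
as a signed measure (all intermediate terms $\mu^{*k} * \nu^{*(n-k)}$ for $1\leq k \leq n-1$ cancel).

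It remains only to rewrite the desired conclusion $\mu*\pi \stochprec \nu*\pi$ in the form $(\nu*\pi - \mu*\pi)[t,+\iy) \geq 0$ for all $t \in \R$, and observe via the telescoping identity that this is exactly $\tfrac{1}{n}(\nu^{*n} - \mu^{*n})[t,+\iy) \geq 0$, which is precisely the assumption $\mu^{*n}\stochprec\nu^{*n}$.

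There is no real obstacle here beyond spotting the right catalyst: the proof is a one-line telescoping once $\pi$ is chosen. The only subtlety worth flagging is that stochastic domination is defined through tail functionals that are additive on signed measures, so the equation between $\nu*\pi - \mu*\pi$ and $\nu^{*n}-\mu^{*n}$ translates directly into the equivalence of tail inequalities; this is what makes the candidate $\pi$ work.
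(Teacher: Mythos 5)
Your proof is correct and uses precisely the same catalyst $\pi$ as the paper; the paper simply names the common part $\rho = \tfrac{1}{n}\sum_{k=1}^{n-1}\mu^{*k}*\nu^{*(n-k)}$ and writes $\mu*\pi = \tfrac{1}{n}\mu^{*n}+\rho$, $\nu*\pi = \tfrac{1}{n}\nu^{*n}+\rho$, which is the same telescoping cancellation you express via signed measures.
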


\begin{proof}
Assume that $\mu^{*n} \stochprec \nu^{*n}$ for some $n$. Let $\pi$ the probability measure defined by
\[ \pi = \frac{1}{n}  \sum_{k=0}^{n-1} \mu^{*k}*\nu^{*(n-1-k)} . \]
Let also $\rho$ be the measure defined by 
\[ \rho = \frac{1}{n} \sum_{k=1}^{n-1} \mu^{*k}*\nu^{*(n-k)} , \]
then one has $\mu * \pi = \frac{1}{n}\mu^{*n} + \rho$ and $\nu * \pi = \frac{1}{n} \nu^{*n}+\rho$, and since $\mu^{*n} \stochprec \nu^{*n}$ this implies $\mu * \pi \stochprec	\nu*\pi$. Since $\pi \in \PE$, we get $\mu \Cstochprec \nu$.
\end{proof}

From Theorem \ref{theo-measures} and Lemma \ref{catalysis-measures} one can easily derive the

\begin{cor}
The analogue of Theorem \ref{theo-measures} is true if we substitute $\Mstochprec$ with $\Cstochprec$.
\end{cor}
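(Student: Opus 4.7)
The plan is a short sandwich argument between Theorem \ref{theo-measures} and the easy inequality supplied by Lemma \ref{catalysis-measures}. Write $R^{\mathrm{C}} \subset \PE^2$ for the set of couples $(\mu,\nu)$ with $\mu \Cstochprec \nu$, and let $X$ denote the right-hand side of \eqref{closure}, which the proof of Theorem \ref{theo-measures} has already shown to be $\tau$-closed.

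First I would invoke Lemma \ref{catalysis-measures} to conclude $R \subseteq R^{\mathrm{C}}$. Taking $\tau$-closures and using Theorem \ref{theo-measures} immediately gives $X = \overline{R} \subseteq \overline{R^{\mathrm{C}}}$, which is one half of the claimed identity.

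For the reverse inclusion it suffices, by $\tau$-closedness of $X$, to check $R^{\mathrm{C}} \subseteq X$. Given $(\mu,\nu) \in R^{\mathrm{C}}$, pick a witness $\pi \in \PE$ with $\mu * \pi \stochprec \nu * \pi$. I would apply the functional characterization of stochastic domination to the test function $\el$, which is increasing when $\lambda \geq 0$ and decreasing when $\lambda \leq 0$, exploit the multiplicativity $\int \el\, d(\mu*\pi) = (\int \el\, d\mu)(\int \el\, d\pi)$ (and likewise for $\nu$), and then divide through by the factor $\int \el\, d\pi$, which is finite and strictly positive because $\pi \in \PE$. The sign of $\lambda$ dictates which of the two inequalities defining $X$ is produced, matching the description of $X$ in each case.

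I do not expect a genuine obstacle: the essential work is encapsulated in Lemma \ref{catalysis-measures} and in the proof of Theorem \ref{theo-measures}, and the remainder is just multiplicativity of the Laplace transform under convolution together with positivity of $\int \el\, d\pi$.
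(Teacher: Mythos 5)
Your sandwich argument is correct and is exactly what the paper's terse remark "one can easily derive" has in mind: one inclusion is immediate from Lemma \ref{catalysis-measures} together with Theorem \ref{theo-measures}, and the other follows from the functional characterization of $\stochprec$ applied to $\el$, the multiplicativity of $\mu \mapsto \int \el\, d\mu$ under convolution, and the strict positivity of $\int \el\, d\pi$ for $\pi \in \PE$. The paper gives no explicit proof, but your reasoning supplies the missing details faithfully and requires no correction.
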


\section{Catalytic majorization} \label{sec:quantum}

This section is dedicated to the study of the majorization relation, the notion which was the initial motivation of this work. The majorization relation provides, much as the stochastic domination for probability measures, a partial order on the set of probability vectors. Originally introduced in linear algebra \cite{mo, bhatia}, it has found many application in quantum information theory with the work of Nielsen \cite{nielsen_art, nielsen_book}. We shall not focus on quantum-theoretical aspects of majorization; we refer the interested reader to \cite{an} and references therein. Here, we study majorization by adapting previously obtained results for stochastic domination.


The majorization relation is defined for \emph{probability vectors}, i.e. vectors $x \in \R^\N$ with non-negative components ($x_i \geq 0$) which sum up to one ($\sum_i x_i = 1$). Before defining precisely majorization, let us introduce some notation. For $d \in \N^*$, let $P_d$ be the set of $d$-dimensional probability vectors : $P_d = \{ x \in \R^d \st x_i \geq 0, \sum x_i=1 \}$. Consider also the set of finitely supported probability vectors $P_{<\iy} = \bigcup_{d>0} P_d$. We equip $P_{<\iy}$ with the $\ell_1$ norm defined by $\|x\|_1 = \sum_i |x_i|$. For a vector $x \in P_{< \iy}$, we write $x_{\max}$ for the largest component of $x$ and $x_{\min}$ for its smallest non-zero component. In this section we shall consider only finitely supported vectors. For the general case, see Section \ref{sec:inf_cat}. We shall identify an element $x \in P_d$ with the corresponding element in $P_{d'}$ ($d' >d$) or $P_{<\iy}$ obtained by appending null components at the end of $x$. 

Next, we define $x^{\downarrow}$, the decreasing rearrangement of a vector $x \in P_d$ as the vector which has the same coordinates as $x$ up to permutation and such that $x^\downarrow_i \geq  x^\downarrow_{i+1}$ for all $1 \leq i <d$. We can now define majorization in terms of the ordered vectors:

\begin{defn}
For $x,y \in P_d$ we say that $x$ is majorized by $y$ and we write $x \prec y$ if for all $k \in \{1, \ldots, d\}$
\begin{equation} \label{defmaj}
\sum_{i=1}^{k}{x^\downarrow_i} \leq \sum_{i=1}^{k}{y^\downarrow_i}. \end{equation}
\end{defn}

Note however that there are several equivalent definitions of majorization which do not use the ordering of the vectors $x$ and $y$ (see \cite{bhatia} for further details):

\begin{pr}
The following assertions are equivalent:
\begin{enumerate}
\item $x \prec y$,
\item $\forall t \in \R,  \sum_{i=1}^{d}{|x_i-t|} \leq \sum_{i=1}^{d}{|y_i-t|}$,
\item $\forall t \in \R,  \sum_{i=1}^{d}{(x_i-t)^+} \leq \sum_{i=1}^{d}{(y_i-t)^+}$, where $z^+ = \max(z, 0)$,
\item There is a bistochastic matrix $B$ such that $x= By$.
\end{enumerate}
\end{pr}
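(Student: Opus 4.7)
The plan is to prove the equivalences by establishing (2)$\Leftrightarrow$(3), (1)$\Leftrightarrow$(3), (4)$\Rightarrow$(1), and finally (1)$\Rightarrow$(4). The algebraic and linear-programming implications are routine; the real work lies in (1)$\Rightarrow$(4), the classical Hardy--Littlewood--P\'olya direction.

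For (2)$\Leftrightarrow$(3) I would use the pointwise identity $|z|=2z^+-z$: summing over $i$ gives
\[\sum_{i=1}^d |x_i-t|=2\sum_{i=1}^d (x_i-t)^+ + dt - \sum_{i=1}^d x_i,\]
and since $x,y \in P_d$ have identical total mass $1$, the two inequalities differ only by a common affine function of $t$. For (1)$\Leftrightarrow$(3) the key is the envelope representation
\[\sum_{i=1}^d (x_i-t)^+ = \max_{0\leq j\leq d}\Bigl(\sum_{i=1}^j x^\downarrow_i - jt\Bigr),\]
where the maximum is attained at $j$ equal to the number of components of $x$ strictly greater than $t$. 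Assuming (1), each affine function inside the maximum is dominated by its $y$-counterpart, so the whole envelope is dominated and (3) follows. Conversely, plugging $t=y^\downarrow_k$ into (3) gives
\[\sum_{i=1}^k y^\downarrow_i - k y^\downarrow_k = \sum_i (y_i-y^\downarrow_k)^+ \geq \sum_i (x_i-y^\downarrow_k)^+ \geq \sum_{i=1}^k x^\downarrow_i - k y^\downarrow_k,\]
which rearranges to (1).

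For (4)$\Rightarrow$(1), given $x=By$ I fix $k$ and let $I$ index the top-$k$ components of $x$. Then $\sum_{i\in I}x_i = \sum_j c_j y_j$ with $c_j := \sum_{i\in I}B_{ij}\in[0,1]$ and $\sum_j c_j = k$. Maximizing the linear functional $c\mapsto \sum_j c_j y_j$ over the polytope $\{c\in[0,1]^d : \sum c_j = k\}$ forces $c_j=1$ on the top-$k$ indices of $y$, which yields $\sum_{i=1}^k x^\downarrow_i \leq \sum_{i=1}^k y^\downarrow_i$.

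The main obstacle is (1)$\Rightarrow$(4). I would prove it by induction using Robin-Hood T-transforms. After applying permutation matrices (which are bistochastic) we may assume both $x$ and $y$ are decreasingly ordered. If $x\prec y$ and $x\neq y$, one locates indices $p<q$ with $y_p>x_p$ and $y_q<x_q$ and considers the bistochastic matrix $T_\theta = (1-\theta)\Id + \theta P_{pq}$. The plan is to choose $\theta\in(0,1)$ so that $y':=T_\theta y$ still satisfies $x\prec y'$ but agrees with $x$ in strictly more coordinates than $y$ did. Iterating at most $d$ times expresses $x$ as a product of such bistochastic matrices applied to $y$, and the product of bistochastic matrices is bistochastic. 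The delicate points will be verifying that such a pair $(p,q)$ always exists when $x\neq y$, that $\theta$ can be chosen so as both to preserve $x\prec y'$ and to increase the number of matching coordinates, and that the induction therefore terminates in finitely many steps.
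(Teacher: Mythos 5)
The paper offers no proof of this proposition; it is stated as a standard fact with a pointer to \cite{bhatia}, so there is no ``paper proof'' to compare against. Your proposal follows the classical Hardy--Littlewood--P\'olya route, which is precisely what \cite{bhatia,mo} do, and the three elementary implications are correct as written: the identity $|z|=2z^{+}-z$ together with $\sum_i x_i=\sum_i y_i$ gives (2)$\Leftrightarrow$(3); the envelope
\[ \sum_{i=1}^{d}(x_i-t)^{+}=\max_{0\leq j\leq d}\Bigl(\sum_{i=1}^{j}x^{\downarrow}_i-jt\Bigr) \]
gives (1)$\Leftrightarrow$(3) exactly as you argue; and the linear-programming step for (4)$\Rightarrow$(1) is fine since the vertices of $\{c\in[0,1]^d:\sum_j c_j=k\}$ are the $0$--$1$ vectors with exactly $k$ ones.

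For (1)$\Rightarrow$(4) your outline is right, but the ``delicate points'' you flag are real and are resolved by a specific choice of $(p,q)$ that you should make explicit. With $x=x^{\downarrow}$ and $y=y^{\downarrow}$, take $p$ to be the \emph{largest} index with $x_p<y_p$ and $q$ the \emph{smallest} index exceeding $p$ with $x_q>y_q$; mass conservation ensures such a pair exists when $x\neq y$, and by maximality/minimality $x_i=y_i$ for $p<i<q$, while $y_p>x_p\geq x_q>y_q$. Transferring $\delta=\min(y_p-x_p,\,x_q-y_q)$ from coordinate $p$ to coordinate $q$ then produces $y'$ that (i) is still decreasingly ordered (check $y'_p=y_p-\delta\geq x_p\geq x_{p+1}=y'_{p+1}$ and $y'_{q-1}=x_{q-1}\geq x_q\geq y_q+\delta=y'_q$), (ii) still satisfies $x\prec y'$ (the partial sums drop by exactly $\delta$ only for $p\leq m<q$, where the existing slack is $\sum_{i\leq m}(y_i-x_i)=\sum_{i\leq p}(y_i-x_i)\geq y_p-x_p\geq\delta$), and (iii) agrees with $x$ in exactly one more coordinate than $y$ did. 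Because no re-sorting is ever needed, the matching count is strictly increasing and the process terminates in at most $d$ steps. With that refinement your argument is complete.
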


There are two operations on probability vectors which are of particular interest to us: the tensor product and the direct sum. For $x=(x_1,\dots,x_d) \in P_d$ and $x'=(x'_1,\dots,x'_{d'}) \in P_{d'}$, we define the tensor product $x \otimes x'$ as the vector $(x_i x'_{j})_{ij} \in P_{dd'}$. We also define the direct sum $x \oplus x'$ as the concatenated vector $(x_1,\dots,x_d,x'_1,\dots,x'_{d'})\in \R^{d+d'}$. Note that if we take $\oplus$-convex combinations, we get probability vectors: $\lambda x \oplus (1-\lambda) x' \in P_{d+d'}$. 

The construction which permits us to use tools from stochastic domination in the framework of majorization is the following (inspired by \cite{kuperberg}): to a probability vector $z \in P_{< \iy}$ we associate a probability measure $\mu_z$ defined by:
\[ \mu_z = \sum z_i \delta_{\log z_i}. \]
These measures behave well with respect to tensor products: 
\[\mu_{x \otimes y} = \mu_x * \mu_y.\]
The connection between majorization and stochastic domination is provided by the following lemma:
\begin{lemma}
\label{lemma-vector-measure}
Let $x,y \in P_{<\iy}$. Assume that $\mu_x \stochprec \mu_y$. Then $x \prec y$.
\end{lemma}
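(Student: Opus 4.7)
The plan is to reduce $x \prec y$ to a family of scalar inequalities that can be read off from the functional characterization of stochastic domination. Specifically, I will use the characterization
\[ x \prec y \iff \forall t \in \R, \ \sum_{i=1}^{d}(x_i - t)^+ \leq \sum_{i=1}^{d}(y_i - t)^+ \]
stated in the proposition preceding this lemma. To apply it, first pad the shorter of $x$ and $y$ with trailing zeros so that both live in the same $P_d$; this leaves $\mu_x$ and $\mu_y$ unchanged since zero components contribute no mass.

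For $t \leq 0$ the inequality is immediate: since every $x_i, y_i \geq 0$, both sides equal $1 - dt$, giving equality. So the whole content of the lemma lies in the case $t > 0$, which I will reach via the functional characterization of $\stochprec$: it suffices to exhibit, for each $t > 0$, an increasing function $f_t : \R \to \R$ such that
\[ \int f_t \, d\mu_z = \sum_i (z_i - t)^+ \]
for every probability vector $z \in P_{<\iy}$. The natural guess is
\[ f_t(u) = \bigl(1 - t e^{-u}\bigr)^+ = \max\!\bigl(1 - t e^{-u}, 0\bigr), \]
which is continuous, bounded, and non-decreasing on $\R$ (it vanishes on $(-\iy, \log t]$ and has positive derivative $t e^{-u}$ on $(\log t, +\iy)$).

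The key verification is then a direct computation: for every $i$ with $z_i > 0$,
\[ z_i\, f_t(\log z_i) = z_i \bigl(1 - t/z_i\bigr)^+ = (z_i - t)^+, \]
where the last equality is obvious in both cases $z_i > t$ and $0 < z_i \leq t$. Summing over $i$ gives $\int f_t \, d\mu_z = \sum_i (z_i - t)^+$, since zero components contribute nothing on either side (for $t > 0$ one has $(0-t)^+ = 0$). Applying the functional characterization of stochastic domination to $f_t$ and $\mu_x \stochprec \mu_y$ then yields
\[ \sum_i (x_i - t)^+ \leq \sum_i (y_i - t)^+ \quad \text{for all } t > 0, \]
and combined with the trivial case $t \leq 0$ this gives $x \prec y$. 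There is no real obstacle here beyond guessing the right test function $f_t$; the point is that multiplication by $z_i$ inside $\mu_z$ exactly cancels the $1/z_i$ in $f_t(\log z_i)$, turning the expectation into the sum defining majorization.
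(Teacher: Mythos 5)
Your proof is correct, and it takes a genuinely different route from the paper's. The paper assumes $x = x^\downarrow$, $y = y^\downarrow$ and works directly with the partial-sum definition \eqref{defmaj}: it observes that $\mu_x[t,\infty) = \sum_{x_i \geq e^t} x_i$, obtains $\sum_{x_i \geq u} x_i \leq \sum_{y_i \geq u} y_i$ for all $u>0$, and then handles the partial sums $\sum_{i=1}^k$ by a case analysis (it suffices to treat indices $k$ with $x_k > y_k$, where one can slide a threshold $u$ strictly between $y_k$ and $x_k$). Your approach instead goes through the third characterization of majorization ($\sum(x_i-t)^+ \leq \sum(y_i-t)^+$) and matches it to the functional characterization of $\stochprec$ by constructing the bounded non-decreasing test function $f_t(u) = (1 - te^{-u})^+$, for which $\int f_t\, d\mu_z = \sum_i (z_i - t)^+$. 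This is arguably more conceptual: it exhibits the map $z \mapsto \mu_z$ as an exact translation between two functional characterizations, replacing the paper's combinatorial case analysis by a single change-of-variable identity. The paper's proof is more elementary in that it stays entirely within the order-theoretic definitions and avoids having to guess a test function, but your version makes the mechanism behind the lemma more transparent. Both are valid; your handling of the padding by zeros, the $t \leq 0$ case, and the verification that $f_t$ is non-decreasing and bounded are all correct.
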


\begin{proof}
We can assume that $x = x^\downarrow$ and $y = y^\downarrow$. Note that
\[\mu_x[t, \iy) = \sum_{i:\log x_i \geq t}{x_i} = \sum_{i:x_i \geq \exp(t)}{x_i}.\]
Thus, for all $u > 0$, $\sum_{i:x_i \geq u}{x_i} \leq \sum_{i:y_i \geq u}{y_i}$. To start, use $u=y_1$ to conclude that $x_1 \leq y_1$. Notice that it suffices to show that $\sum_{i=1}^{k}{x_i} \leq \sum_{i=1}^{k}{y_i}$ only for those $k$ such that $x_k > y_k$ (indeed, if $x_k \leq y_k$, the $(k+1)$-th inequality in \eqref{defmaj} can be deduced from the $k$-th inequality). Consider such a $k$ and let $x_k > u >y_k$. We get:
\[ \sum_{i=1}^{k} {x_i} \leq \sum_{i:x_i \geq u} x_i \leq \sum_{i:y_i \geq u} y_i \leq \sum_{i=1}^{k} {y_i},\]
which completes the proof of the lemma.
\end{proof}

\begin{rk}
The converse of this lemma does not hold. Indeed, consider $x=(0.5, 0.5)$ and $y=(0.9, 0.1)$. Obviously, $x \prec y$ but $1 = \mu_x[\log 0.5, \iy) > \mu_y[\log 0.5, \iy) = 0.9$ and thus $\mu_x \notstochprec \mu_y$.
\end{rk}

We can describe the majorization relation by the sets:
\[S_d(y) = \{ x \in P_d  \text{ s.t. } x \prec y \}, \]
where $y$ is a finitely supported probability vector. Mathematically, such a set is characterized by the following lemma, which is a simple consequence of Birkhoff's theorem on  bistochastic matrices:
\begin{lemma}
For $y$ a $d$-dimensional probability vector, the set $S(y)$ is a polytope whose extreme points are $y$ and its permutations. 
\end{lemma}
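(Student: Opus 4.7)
The natural route is via the bistochastic characterization of majorization. By item (4) of the preceding proposition, $x \prec y$ if and only if $x = By$ for some $d \times d$ bistochastic matrix $B$. Writing $\Omega_d$ for the set of $d \times d$ bistochastic matrices, this yields
\[ S(y) = \{ By \st B \in \Omega_d \} . \]
Birkhoff's theorem states that $\Omega_d$ is a polytope (in $\R^{d^2}$) whose extreme points are exactly the $d!$ permutation matrices. The map $B \mapsto By$ is linear, so the image $S(y) \subset \R^d$ is a polytope; moreover, any extreme point of a linear image of a polytope lies in the image of the extreme points of the source. Therefore every extreme point of $S(y)$ is of the form $\sigma y$ for some permutation $\sigma$ of $\{1,\dots,d\}$.

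It remains to show the converse: each permutation of $y$ is actually an extreme point of $S(y)$. My plan is to avoid a direct rigidity calculation and instead exploit symmetry. Observe that the majorization relation is invariant under permutations of coordinates: $x \prec y$ iff $\sigma x \prec y$ for any permutation $\sigma$, since the definition \eqref{defmaj} depends only on the decreasing rearrangement. Hence the polytope $S(y)$ is stable under the natural action of the symmetric group on $\R^d$, and so is the (finite, nonempty) set of its extreme points.

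Since $S(y)$ is a nonempty bounded polytope, Krein--Milman guarantees that it has at least one extreme point, which by the first step must have the form $\sigma_0 y$ for some permutation $\sigma_0$. Applying arbitrary permutations $\tau$ (which map extreme points to extreme points, by the symmetry just noted) we obtain that every element of the orbit $\{\tau \sigma_0 y \st \tau \in \textnormal{Sym}_d\} = \{\sigma y \st \sigma \in \textnormal{Sym}_d\}$ is an extreme point of $S(y)$. Combined with the previous paragraph this gives the claimed identification of the extreme set.

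The only subtle step is making sure the symmetry argument is valid even when $y$ has repeated entries: in that case the orbit $\{\sigma y\}$ just has fewer than $d!$ elements, but each of them is still extreme, so nothing breaks. The main substance of the proof is really Birkhoff's theorem and the linearity of $B \mapsto By$; the rest is bookkeeping.
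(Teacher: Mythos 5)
Your proof is correct, and it is the standard Birkhoff-based argument that the paper alludes to without writing out. The first half (linearity of $B \mapsto By$ plus the fact that extreme points of a linear image of a polytope come from extreme points of the source) correctly shows that the extreme points of $S(y)$ are among the permutations of $y$. The second half, using the $\mathrm{Sym}_d$-invariance of $S(y)$ to transport one extreme point around the orbit, cleanly establishes that every permutation of $y$ is in fact extreme; as you note, this handles repeated entries in $y$ automatically. Two tiny remarks, neither a gap: Krein--Milman is overkill here since a nonempty compact polytope trivially has a vertex; and an alternative to the symmetry step is to feed a strictly convex $\Phi$ into the Schur-convexity machinery the paper sets up later, which forces equality of $a$ and $b$ when $\sigma y = \lambda a + (1-\lambda) b$, but your route is equally short and self-contained.
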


The initial motivation for our work was the following phenomena discovered in quantum information theory (see \cite{jp} and respectively \cite{bandy}). It turns out that additional vectors can act as \emph{catalysts} for the majorization relation: there are vectors $x,y,z \in P_{<\iy}$ such that $x \nprec y$ but $x \otimes z \prec y \otimes z$; in such a situation we say that $x$ is catalytically majorized (or \emph{trumped}) by $y$ and we write $x \prec_T y$. Another form of catalysis is provided by \emph{multiple copies} of vectors: we can find vectors $x$ and $y$ such that $x \nprec y$ but still, for some $n \geq 2$, $x^{\otimes n} \prec y^{\otimes n}$; in this case we write $x \prec_M y$. We have thus two new order relations on probability vectors, analogues of $\Cstochprec$ and respectively $\Mstochprec$. As before, for $y \in P_d$, we introduce the sets
\[ T_d(y) = \{ x \in P_d \st x \prec_T y \}, \]
and
\[ M_d(y) = \{ x \in P_d \st x \prec_M y \}. \]

It turns out that the relations $\prec_T$ and $\prec_M$ (and thus the sets $T_d(y)$ and $M_d(y)$) are not as simple as $\prec$ and $S_d(y)$. It is known that the inclusion $M_d(y) \subset T_d(y)$ holds (this is the analogue of Lemma \ref{catalysis-measures}) and that it can be strict \cite{fdy}. In general, the sets $T_d(y)$ and $M_d(y)$ are neither closed nor open, and although $T_d(y)$ is known to be convex, nothing is known about the convexity of $M_d(y)$ (such questions have been intensively studied in the physical literature; see \cite{dk, djfy} and the references therein). As explained in \cite{an} it is natural from a mathematical point of view to introduce the sets $T_{<\iy}(y) = \bigcup_{d \in \N} T_d(y)$ and $M_{<\iy}(y) = \bigcup_{d \in \N} M_d(y)$.
A key notion in characterizing them is \emph{Schur-convexity}:

\begin{defn}
A function $f : P_d \to \R$ is said to be 
\begin{itemize}
\item Schur-convex if $f(x) \leq f(y)$ whenever $x \prec y$,
\item Schur-concave if $f(x) \geq f(y)$ whenever $x \prec y$,
\item strictly Schur-convex if $f(x) < f(y)$ whenever $x \precneqq y$,
\item strictly Schur-concave if $f(x) > f(y)$ whenever $x \precneqq y$,
\end{itemize}
where $x \precneqq y$ means $x \prec y$ and $x^\downarrow \neq y^\downarrow$.
\end{defn}

Examples are provided as follows: if $\Phi:\R \to \R$ is a (strictly) convex/concave function, then the following function $h:P_d \to \R$ defined by $h(x_1,\dots,x_d) = \Phi(x_1) + \cdots + \Phi(x_d)$ is (strictly) Schur-convex/Schur-concave.

For $x \in P_d$ and $p \in \R$, we define $N_p(x)$ as 
\[ N_p(x) = \sum_{\begin{subarray}{c}1 \leq i \leq d \\ x_i > 0 \end{subarray}} x_i^p .\]
We will also use the Shannon entropy $H$
\[ H(x) = -\sum_{i=1}^d x_i \log x_i .\]
Note that $-H(x)$ is the derivative of $p \mapsto N_p(x)$ at $p=1$ and that $N_0(x)$ is the number of non-zero components of the vector $x$.
These functions satisfy the following properties:
\begin{enumerate}
\item If $p > 1$, $N_p$ is strictly Schur-convex on $P_{<\iy}$.
\item If $0 < p < 1$, $N_p$ is strictly Schur-concave on $P_{<\iy}$.
\item If $p < 0$, $N_p$ is strictly Schur-convex on $P_d$ for any $d$. However, for $p<0$, it is not possible to compare vectors with a different number of non-zero components.
\item $H$ is strictly Schur-concave on $P_{<\iy}$.
\end{enumerate}

One possible way of describing the relations $\prec_M$ and $\prec_T$ is to find a family (the smallest possible) of Schur-convex functions which characterizes them. In this direction, Nielsen conjectured the following result:

\begin{conj}
\label{conj-nielsen}
Fix a vector $y \in P_d,$ with nonzero coordinates. Then $\overline{T_d(y)}=\overline{M_d(y)}$ and they both are equal to the set of $x \in P_d$ satisfying
\begin{enumerate}
\item[(C1)] For $p \geq 1$, $N_p(x) \leq N_p(y)$.
\item[(C2)] For $0 < p \leq 1$, $N_p(x) \geq N_p(y)$.
\item[(C3)] For $p < 0$, $N_p(x) \leq N_p(y)$.
\end{enumerate}
\end{conj}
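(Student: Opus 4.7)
The strategy is to transfer the problem from probability vectors to probability measures via the correspondence $x \mapsto \mu_x = \sum_i x_i \delta_{\log x_i}$, which is multiplicative under tensoring ($\mu_{x \otimes z} = \mu_x * \mu_z$) and satisfies the identity $\int e_\lambda\, d\mu_x = N_{1+\lambda}(x)$, so that the tools of Sections~\ref{sec:results} and~\ref{sec:top_geom} can be brought to bear. For the easy inclusions $\overline{M_d(y)} \subset \overline{T_d(y)} \subset \{x : \textnormal{(C1)--(C3)}\}$, pick $x \in M_d(y)$ and $n$ with $x^{\otimes n} \prec y^{\otimes n}$. Since $y \in P_d$ has full support, the majorization forces $x$ itself to have full support (otherwise the partial sum of $(x^{\otimes n})^\downarrow$ would reach $1$ strictly before that of $(y^{\otimes n})^\downarrow$). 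The functional $N_p$ is Schur-convex for $p \in (-\infty,0) \cup (1,\infty)$ and Schur-concave for $p \in (0,1)$ on full-support vectors, and it is multiplicative: $N_p(x \otimes z) = N_p(x) N_p(z)$. Applying this to $x^{\otimes n} \prec y^{\otimes n}$ and taking $n$-th roots yields (C1), (C2), (C3). The argument for $T_d(y)$ is identical after cancelling the positive factor $N_p(z)$. The right-hand side is closed in $P_d$ by continuity of $N_p$.

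For the reverse inclusion, given $x$ satisfying (C1)--(C3), I would approximate it by full-support vectors $x_k \to x$ such that $(\mu_{x_k}, \mu_y)$ satisfies the strict hypotheses (a)--(e) of Proposition~\ref{probabilisticlemma}; then $\mu_{x_k}^{*m} \stochprec \mu_y^{*m}$, which by Lemma~\ref{lemma-vector-measure} gives $x_k^{\otimes m} \prec y^{\otimes m}$, placing $x_k$ in $M_d(y)$ and $x$ in $\overline{M_d(y)}$. With $p = 1+\lambda$, (C1) and (C2) translate cleanly to $\int e_\lambda\, d\mu_x \leq \int e_\lambda\, d\mu_y$ for $\lambda \geq 0$ and to the reverse inequality for $\lambda \in [-1,0]$, matching (a) and the part of (b) with $\lambda \in [-1,0]$. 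The main obstacle is that (C3) translates to $\int e_\lambda\, d\mu_x \leq \int e_\lambda\, d\mu_y$ for $\lambda < -1$, the \emph{opposite} of what (b) demands; asymptotically, (C3) forces $x_{\min} \geq y_{\min}$, whereas $\mu_x \Mstochprec \mu_y$ requires $x_{\min} \leq y_{\min}$, so the single-measure reduction cannot capture $\overline{M_d(y)}$ directly.

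To bypass this I would complement $\mu_x$ with the uniform counting measure $\nu_x = \frac{1}{d}\sum_i \delta_{\log x_i}$ (for which $\int e_\lambda\, d\nu_x = N_\lambda(x)/d$) and rewrite the defining majorization inequality $\sum_j ((x^{\otimes n})_j - t)^+ \leq \sum_j ((y^{\otimes n})_j - t)^+$ as
\[ \mu_x^{*n}[\log t,\infty) - t\, d^n\, \nu_x^{*n}[\log t,\infty) \leq \mu_y^{*n}[\log t,\infty) - t\, d^n\, \nu_y^{*n}[\log t,\infty) \]
for all $t>0$. Applying Cram\'er's theorem to both convolution families, (C1) and (C2) should control the $\mu$-side tails while (C3) should control the $\nu$-side counting tail, which is precisely what governs the small coordinates of $x^{\otimes n}$ and is insensitive to the value-weight. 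The hardest step will be the borderline regime where the Cram\'er rate functions of $x$ and $y$ coincide; this should be handled by a preliminary perturbation of $x$ turning all of (C1)--(C3) into strict inequalities — mimicking the density construction in the proof of Theorem~\ref{theo-measures} by shifting small Dirac masses at the extremes — combined with the Dini/P\'olya uniform-convergence trick used in Proposition~\ref{probabilisticlemma}.
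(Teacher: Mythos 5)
The statement you are attempting to prove is explicitly labelled a \emph{Conjecture} in the paper, and the paper does not prove it; it only establishes the weaker Theorems~\ref{th1} and~\ref{th2}, which allow the approximating vectors to live in a larger ambient dimension ($M_{<\iy}(y)$ or $M_{d+1}(y)$) rather than in $M_d(y)$ itself. The authors state plainly that their approach via Proposition~\ref{lemma-main} ``fails to prove Conjecture~\ref{conj-nielsen}'' because that proposition forces $N_0(x) > N_0(y)$, i.e.\ $d_x > d_y$, while (C2) and (C3) together pin down $d_x = d_y$. You correctly rediscover this obstruction in your second paragraph (the sign flip in the translated (C3) for $\lambda < -1$), and your proof of the easy inclusions $\overline{M_d(y)} \subset \overline{T_d(y)} \subset \{x : \text{(C1)--(C3)}\}$ via full-support forcing, Schur-convexity, and multiplicativity of $N_p$ is sound and matches the remark preceding the conjecture.

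The genuine gap is in the reverse inclusion, and your proposed workaround does not close it. Introducing the counting measure $\nu_x$ and rewriting majorization via the $(x_j - t)^+$ functional is a reasonable reformulation, but the resulting object is a \emph{difference} $\mu_x^{*n}[\cdot,\iy) - t\,d^n\,\nu_x^{*n}[\cdot,\iy)$ of two quantities with comparable exponential decay rates, and Cram\'er's theorem, which you invoke, controls only the logarithmic rate of a single tail probability, not differences of tails. When the two rate functions coincide or nearly coincide --- which is unavoidable here precisely because $d_x = d_y$ forces the counting measures $\nu_x$ and $\nu_y$ to have the same total mass and hence the same $N_0$ --- the leading exponential order cancels and the comparison is decided by sub-exponential corrections (Bahadur--Rao type prefactors), which the uniform Dini/P\'olya argument of Proposition~\ref{probabilisticlemma} cannot see. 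A ``preliminary perturbation making the inequalities strict'' does not help, because you cannot perturb $x$ inside $P_d$ to make $N_0(x) > N_0(y)$ while keeping $x$ close to the boundary case $N_0(x) = N_0(y) = d$; this is exactly why the paper's authors must escape into $P_{d+1}$ or $P_{<\iy}$ in Theorems~\ref{th1}--\ref{th2}. In short, you have sketched a plausible research direction rather than a proof; the conjecture, at least as regards $\overline{M_d(y)}$, was open at the time of writing and is not settled by the techniques of this paper.
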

Here, the closures are taken in $\R^d$ (recall that neither $M_d(y)$ nor $T_d(y)$ is closed). By the previous remarks, any vector in  $T_d(y)$ or $M_d(y)$ (and by continuity, also in the closures) must satisfy conditions (C1-C3). Recently, Turgut \cite{t1, t2} provided a complete characterization of the set $T_d(y)$, which implies in particular that Nielsen's conjecture is true for $\overline{T_d(y)}$. His method, completely different from ours, consists in solving a discrete approximation of the problem using elementary algebraic techniques. Note however that the inclusion $M_d(y) \subset T_d(y)$ is strict in general, and thus the characterization of $\overline{M_d(y)}$ is still open. We shall now focus on the set $M_d(y)$. Conjecture \ref{conj-nielsen} can be reformulated as follows: if $x,y \in P_d$ and satisfy (C1-C3), then there exists a sequence $(x_n)$ in $M_d(y)$ such that $(x_n)$ converges to $x$. If we relax the condition that $x_n$ and $y$ have the same dimension, we can prove the following two theorems:

\begin{theo}
\label{th1}
If $x,y \in P_d$ and satisfy (C1), then there exists a sequence $(x_n)$ in $M_{< \iy}(y)$ such that $(x_n)$ converges to $x$ in $\ell_1$-norm.
\end{theo}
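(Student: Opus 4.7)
The plan is to approximate $x$ by vectors of the form $\tilde x_n = (1-\delta_n) x \oplus \tfrac{\delta_n}{K_n}(1,\dots,1)$ (the components of $x$ rescaled by $1-\delta_n$, concatenated with $K_n$ brand-new tiny atoms each of mass $\delta_n/K_n$), where $\delta_n \to 0$ and $K_n$ will be taken astronomically large depending on $\delta_n$. Since $\|\tilde x_n - x\|_1 = 2\delta_n \to 0$, the $\ell_1$-convergence is automatic and the entire task is to show that $\tilde x_n \in M_{<\iy}(y)$.

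To reach $\tilde x_n \in M_{<\iy}(y)$ I pass to the measure side. Using the correspondence $z \mapsto \mu_z$, the identity $\mu_{z^{\otimes m}} = \mu_z^{*m}$, and Lemma \ref{lemma-vector-measure}, it suffices to prove $\mu_{\tilde x_n} \Mstochprec \mu_y$, which I establish by checking the five hypotheses of Proposition \ref{probabilisticlemma}. Under the identification $\int \el \, d\mu_z = N_{1+\lambda}(z)$, these translate to (a) $N_p(\tilde x_n) < N_p(y)$ for $p > 1$; (b) $N_p(\tilde x_n) > N_p(y)$ for $p < 1$; (c) $H(\tilde x_n) > H(y)$; (d) $\tilde x_{n,\max} < y_{\max}$; (e) $\tilde x_{n,\min} < y_{\min}$. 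Items (c), (d), (e) are near-immediate from the construction: $H(\tilde x_n) \sim \delta_n \log K_n \to \infty$, $(1-\delta_n)x_{\max} < x_{\max} \leq y_{\max}$ (the last inequality from (C1) by letting $p \to \infty$), and $\delta_n/K_n < y_{\min}$ for $K_n$ large. For (a), a direct computation gives $N_p(\tilde x_n) = (1-\delta_n)^p N_p(x) + \delta_n^p K_n^{1-p}$; combining (C1) with the inequality $1 - (1-\delta_n)^p \geq \delta_n$ (valid for all $p \geq 1$) reduces (a) to $(\delta_n/K_n)^{p-1} < N_p(y)$ for $p > 1$, and the log-convexity bound $N_p(y) \geq e^{-H(y)(p-1)}$ on $[1,\infty)$ makes this true as soon as $\delta_n/K_n < e^{-H(y)}$.

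The delicate condition is (b), because for $p \to 1^-$ the naive estimate $\delta_n^p K_n^{1-p} > N_p(y)$ breaks down (the two sides tend to $\delta_n$ and $1$ respectively). My strategy is to expand $g(p) := N_p(\tilde x_n) - N_p(y)$ in its power series at $p = 1$: since both measures are finitely supported, $g$ is entire and
\[ g(p) = \sum_{k \geq 1} \frac{(p-1)^k}{k!} D_k, \qquad D_k := H_k(\tilde x_n) - H_k(y), \qquad H_k(z) := \sum_i z_i (\log z_i)^k. \]
The tiny atoms contribute $(-1)^k \delta_n (\log(K_n/\delta_n))^k$ to $H_k(\tilde x_n)$; if $K_n$ is chosen large enough that this contribution dominates both $|H_k(y)|$ and the $H_k$-contribution from the rescaled atoms of $x$ \emph{uniformly in} $k \geq 1$ (the binding constraint, coming from $k=1$, is of the form $\log(K_n/\delta_n) \gtrsim 1/\delta_n$, and hence $K_n$ must be super-exponentially large in $1/\delta_n$), then $D_k$ has sign $(-1)^k$ for every $k \geq 1$. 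For $p < 1$ the factor $(p-1)^k$ has the same sign $(-1)^k$, so every term of the series is nonnegative and the $k=1$ term is strictly positive by (c), yielding $g(p) > 0$ for all $p < 1$. I expect verifying this uniform-in-$k$ sign alignment to be the main technical hurdle; once it is in place, Proposition \ref{probabilisticlemma} applies and the proof is complete.
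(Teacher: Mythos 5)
Your proposal is correct, and the uniform-in-$k$ sign estimate you flag as the main hurdle does go through: writing $L = \log(K_n/\delta_n)$, $a = -\log\bigl((1-\delta_n)x_{\min}\bigr)$ and $b = -\log y_{\min}$, the tiny-atom contribution to $H_k(\tilde{x}_n)$ dominates the rest once $\delta_n L^k > a^k + b^k$ for all $k \geq 1$; for $L > \max(a,b)$ the ratio $\delta_n L^k/(a^k+b^k)$ is increasing in $k$, so the $k=1$ requirement $L \gtrsim (a+b)/\delta_n$ is both binding and sufficient, and $D_k$ indeed has sign $(-1)^k$ for every $k$. The overall skeleton agrees with the paper's --- both attach a cloud of many tiny equal atoms to $x$ and invoke Proposition \ref{lemma-main} --- but the hard verification, $N_p(\tilde{x}) > N_p(y)$ for all $p<1$, is carried out by a different mechanism. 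The paper first passes through an intermediate vector $x_\e$ (subtract $\e/d$ from each coordinate and append one atom of mass $\e$), uses $x_\e \precneqq x$ together with strict Schur-concavity and a continuity argument to locate a threshold $p_\e<1$ with $N_p(x_\e) \geq N_p(y)$ on $[p_\e,1]$, and only then splits the atom of mass $\e$ into $k$ pieces, choosing $k$ large enough to settle the remaining range $p \leq p_\e$; thus $(-\iy,1)$ is covered in two pieces by two different arguments. Your route skips the intermediate vector and the threshold entirely: expanding $N_p(\tilde{x}_n) - N_p(y)$ at $p=1$ and forcing every Taylor coefficient to alternate in sign covers all of $p<1$ in one stroke, at the price of the uniform-in-$k$ bound and of using that $p \mapsto N_p$ is entire. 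The paper's argument is the more elementary of the two (nothing beyond continuity and Schur-concavity), but it carries the $\e$-dependent $p_\e$ into the choice of $k$; yours is more uniform and, once the sign-alignment estimate is written out, arguably cleaner. The choice of perturbation (your multiplicative rescaling by $1-\delta_n$ versus the paper's additive shift by $-\e/d$) is inessential.
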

\begin{theo}
\label{th2}
If $x,y \in P_d$ and satisfy (C1-C2), then there exists a sequence $(x_n)$ in $M_{d+1}(y)$ such that $(x_n)$ converges
 to $x$.
\end{theo}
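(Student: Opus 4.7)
The plan is to approximate $x \in P_d$ by a vector $x_n \in P_{d+1}$ obtained by shaving off a small mass $\epsilon_n$ from $x$ and appending it as an extra $(d+1)$-th coordinate:
\[ x_n = \bigl((1-\epsilon_n) x_1, \ldots, (1-\epsilon_n) x_d, \epsilon_n\bigr), \]
with $\epsilon_n \to 0^+$ to be chosen. The added tiny coordinate plays a double role: it supplies a new, small minimum so that $(x_n)_{\min} < y_{\min}$, and it inflates $N_p(x_n)$ for very negative $p$, compensating for the absence of hypothesis (C3). Clearly $\|x_n - x\|_1 = 2\epsilon_n \to 0$. To establish $x_n \in M_{d+1}(y)$, I transfer to the measure side: since $\mu_{z^{\otimes k}} = \mu_z^{*k}$ and Lemma \ref{lemma-vector-measure} yields $\mu_a \stochprec \mu_b \Rightarrow a \prec b$, it suffices to prove $\mu_{x_n} \Mstochprec \mu_y$, for which I will verify the hypotheses of Proposition \ref{probabilisticlemma}.

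Using $\int \el\, d\mu_z = N_{1+\lambda}(z)$ and $\int t\, d\mu_z(t) = -H(z)$, the five conditions of Proposition \ref{probabilisticlemma} translate (writing $p = 1+\lambda$) to: (a) $N_p(x_n) < N_p(y)$ for $p > 1$; (b) $N_p(x_n) > N_p(y)$ for $p < 1$; (c) $H(x_n) > H(y)$; (d) $(x_n)_{\max} < y_{\max}$; (e) $(x_n)_{\min} < y_{\min}$. Conditions (d) and (e) are immediate: taking $p \to +\infty$ in (C1) gives $x_{\max} \le y_{\max}$, so $(x_n)_{\max} = (1-\epsilon_n)x_{\max} < y_{\max}$; and $(x_n)_{\min} = \epsilon_n < y_{\min}$ for $\epsilon_n$ small. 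For (c), a direct computation gives $H(x_n) = h(\epsilon_n) + (1-\epsilon_n) H(x)$, where $h$ is the binary entropy; the term $h(\epsilon_n) \sim \epsilon_n \log(1/\epsilon_n)$ dominates $\epsilon_n H(x)$, so $H(x_n) > H(x)$ for small $\epsilon_n$. Moreover, (C1)--(C2) force $\phi(p) := N_p(y) - N_p(x)$ to be nonpositive on $(0,1]$ and nonnegative on $[1,\infty)$ with $\phi(1)=0$; hence $\phi'(1) = H(x) - H(y) \ge 0$, giving $H(x) \ge H(y)$, and (c) follows.

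Conditions (a) and (b) are the heart of the argument. One may assume $x^{\downarrow} \ne y^{\downarrow}$ (otherwise $x \prec y$ holds directly and the constant sequence $x_n \equiv x$ lies in $M_d(y) \subset M_{d+1}(y)$). Then $p \mapsto N_p(x) - N_p(y)$ is a nonzero real-analytic function whose zeros in $(0,\infty)\setminus\{1\}$ are isolated. Using $N_p(x_n) = (1-\epsilon_n)^p N_p(x) + \epsilon_n^p$, the required inequalities are verified by combining four regimes: near $p = 1$ they follow from the derivative argument of (c), since $N_p(x_n)$ and $N_p(y)$ both equal $1$ at $p=1$ and $N_p(x_n) - N_p(y)$ has strictly negative derivative there; as $p \to +\infty$, (a) follows from $(x_n)_{\max} < y_{\max}$ and the asymptotic $N_p(z)^{1/p}\to z_{\max}$; as $p \to -\infty$, (b) follows from $(x_n)_{\min} < y_{\min}$; as $p \to 0^+$, (b) follows from $N_0(x_n) = N_0(x)+1 > N_0(y)$ (since (C2) at $p\to 0^+$ yields $N_0(x) \ge N_0(y)$). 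On the remaining compact ranges of $p$, a compactness argument coupled with the uniform estimate $N_p(x_n) = N_p(x) + O(\epsilon_n)$ delivers the strict inequalities provided $\epsilon_n$ is small enough.

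The main obstacle I anticipate is handling isolated interior equalities $N_{p^*}(x) = N_{p^*}(y)$ at some $p^* \ne 1$: at such a point the comparison $N_{p^*}(x_n) < N_{p^*}(y)$ (for $p^* > 1$, say) cannot be inherited from the hypothesis and must come entirely from the perturbation. It reduces to $\epsilon_n^{p^*} < \bigl(1-(1-\epsilon_n)^{p^*}\bigr) N_{p^*}(x)$, which holds for small $\epsilon_n$ because the right-hand side is of order $p^*\epsilon_n$ while $\epsilon_n^{p^*} = o(\epsilon_n)$. Once (a)--(e) are established for $\epsilon_n$ sufficiently small, Proposition \ref{probabilisticlemma} yields $\mu_{x_n} \Mstochprec \mu_y$, hence $x_n \prec_M y$, so $x_n \in M_{d+1}(y)$, which concludes the proof.
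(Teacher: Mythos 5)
Your construction---appending a small new coordinate to $x$ and passing to Proposition~\ref{probabilisticlemma} via $N_p(z)=\int e_\lambda\,d\mu_z$---is essentially the same strategy as the paper's; the only cosmetic difference is that the paper perturbs additively, $x_\e=(x_1-\e/d,\dots,x_d-\e/d,\e)$, while you perturb multiplicatively, $x_n=((1-\e_n)x_1,\dots,(1-\e_n)x_d,\e_n)$. Either works. However, your verification of the $N_p$-inequalities is much more elaborate than needed, and a couple of the intermediate claims are not correct as stated.

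The key observation you miss is that your $x_n$ satisfies $x_n\precneqq x$ (viewing $x$ as $(x_1,\dots,x_d,0)\in P_{d+1}$): for $k\leq d$ one has $\sum_{i\leq k}(x_n)^\downarrow_i=(1-\e_n)\sum_{i\leq k}x^\downarrow_i<\sum_{i\leq k}x^\downarrow_i$ once $\e_n<(1-\e_n)x_{\min}$ makes the new entry the smallest. Since $N_p$ is \emph{strictly} Schur-convex for $p>1$ and \emph{strictly} Schur-concave for $0<p<1$, this immediately gives $N_p(x_n)<N_p(x)\leq N_p(y)$ for every $p>1$ and $N_p(x_n)>N_p(x)\geq N_p(y)$ for every $p\in(0,1)$, for a single value of $\e_n$, with no appeal to real-analyticity, isolated zeros, or compactness. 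The ``isolated interior equality'' obstacle you spend the last paragraph on is therefore not an obstacle at all: the perturbation gives a strict improvement at \emph{every} $p\neq 1$ in $(0,\infty)$, not just at zeros of $N_p(x)-N_p(y)$, and your inequality $\e_n^{p^*}<\left(1-(1-\e_n)^{p^*}\right)N_{p^*}(x)$ is precisely strict Schur-convexity at $p^*$, re-derived by hand. Separately, your ``uniform estimate $N_p(x_n)=N_p(x)+O(\e_n)$'' is false: $N_p(x_n)-N_p(x)=\left[(1-\e_n)^p-1\right]N_p(x)+\e_n^p$, which is $\Theta(\e_n^p)\gg\e_n$ for $p\in(0,1)$ and blows up for $p<0$; the sign happens to be favourable, but the compactness step as you wrote it is not sound. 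Finally, the whole range $p\leq 0$, where hypotheses (C1)--(C2) say nothing, needs a genuine argument rather than two asymptotic endpoints: one fixes $p_0<0$ with $N_p(y)<d+1$ on $[p_0,0]$ (possible since $N_0(y)=d$ and $N_p(y)$ is continuous), so that $N_p(x_n)\geq N_0(x_n)=d+1>N_p(y)$ there for any $\e_n$; and for $p<p_0$, once $\e_n<d^{1/p_0}y_{\min}$ one gets $N_p(x_n)\geq\e_n^p>d\,y_{\min}^p\geq N_p(y)$. This is exactly the paper's argument, and it closes the gap that your sketch leaves open.
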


Since $M_d(y) \subset T_d(y)$, both theorems have direct analogues for $T_{< \iy}(y)$ and respectively $T_{d+1}(y)$. Theorem \ref{th1} restates the authors' previous result in \cite{an}; however, the proof presented in the next section is more transparent than the previous one. Theorem \ref{th2} answers a question of \cite{an}. It is an intermediate result between Theorem \ref{th1} and Conjecture \ref{conj-nielsen}.

\section{Proof of the theorems}\label{sec:proofs}

We show here how to derive Theorems \ref{th1} and \ref{th2}. We first state a proposition which is the translation of Proposition \ref{probabilisticlemma} in terms of majorization.

\begin{pr}
\label{lemma-main}
Let $x,y \in P_{<\iy}$. Assume that $x$ and $y$ have nonzero coordinates, and respective dimensions $d_x$ and $d_y$.
Assume that
\begin{enumerate}
\item $x_{\min} < y_{\min}$.
\item $x_{\max} < y_{\max}$.
\item $H(x) > H(y)$.
\item $N_p(x) < N_p(y)$ for all $p \in ]1,+\iy[$.
\item $N_p(x) > N_p(y)$ for all $p \in ]-\iy,1[$.
\end{enumerate}
Then there exists an integer $N$ such that for all $n \geq N$, we have $x^{\otimes n} \prec y^{\otimes n}$.
\end{pr}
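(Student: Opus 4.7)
The plan is to reduce the proposition to Proposition \ref{probabilisticlemma} via the correspondence $z \mapsto \mu_z = \sum_i z_i \delta_{\log z_i}$ introduced just before Lemma \ref{lemma-vector-measure}. Since $x$ and $y$ have finite support and nonzero coordinates, $\mu_x$ and $\mu_y$ are compactly supported, hence lie in $\PE$. Recall also that $\mu_{x \otimes y} = \mu_x * \mu_y$, so $\mu_{x^{\otimes n}} = \mu_x^{*n}$ and likewise for $y$. By Lemma \ref{lemma-vector-measure}, to conclude $x^{\otimes n} \prec y^{\otimes n}$ for all $n \geq N$ it suffices to show $\mu_x^{*n} \stochprec \mu_y^{*n}$ for all $n \geq N$, i.e.\ $\mu_x \Mstochprec \mu_y$ with a uniform threshold --- which is exactly the conclusion supplied by Proposition \ref{probabilisticlemma}.

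The key step is therefore to verify that the five hypotheses (1)--(5) on the vectors $x, y$ translate precisely into the five strict inequalities (a)--(e) required on $\mu_x, \mu_y$. The translation rests on the identity
\[ \int \el \, d\mu_z = \sum_i z_i e^{\lambda \log z_i} = \sum_i z_i^{\lambda+1} = N_{\lambda+1}(z), \]
so that hypothesis (a) of Proposition \ref{probabilisticlemma} (for $\lambda > 0$) is exactly hypothesis (4) of the present proposition (for $p = \lambda+1 > 1$), and hypothesis (b) (for $\lambda < 0$) matches hypothesis (5) (for $p = \lambda+1 < 1$). Taking derivatives at $\lambda = 0$, or equivalently computing the mean directly, gives $\int t \, d\mu_z = \sum_i z_i \log z_i = -H(z)$, so hypothesis (c) becomes $-H(x) < -H(y)$, which is hypothesis (3). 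Finally, since the coordinates of $z$ are nonzero, $\max \mu_z = \log z_{\max}$ and $\min \mu_z = \log z_{\min}$, so hypotheses (d) and (e) read $x_{\max} < y_{\max}$ and $x_{\min} < y_{\min}$, i.e.\ hypotheses (2) and (1).

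Putting the pieces together: the assumptions of the proposition ensure that the pair $(\mu_x, \mu_y) \in \PE^2$ satisfies all five strict inequalities of Proposition \ref{probabilisticlemma}. That proposition then yields an integer $N$ such that $\mu_x^{*n} \stochprec \mu_y^{*n}$ for every $n \geq N$. Applying Lemma \ref{lemma-vector-measure} to $x^{\otimes n}$ and $y^{\otimes n}$ (using $\mu_{x^{\otimes n}} = \mu_x^{*n}$) gives $x^{\otimes n} \prec y^{\otimes n}$ for all such $n$, which is the desired conclusion. There is no real obstacle here: the work has already been done in Proposition \ref{probabilisticlemma}, and the present proposition is essentially a dictionary check. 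The only minor point to be careful about is that repeated values among the $x_i$'s cause $\mu_x$ to place combined mass at a single atom, but this is harmless --- $\mu_x$ remains a well-defined compactly supported probability measure and the computations above are unaffected.
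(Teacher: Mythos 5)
Your proof is correct and follows exactly the same route as the paper: translate $x,y$ to the compactly supported measures $\mu_x,\mu_y$, use the identity $\int \el\, d\mu_z = N_{\lambda+1}(z)$ to convert hypotheses (1)--(5) into the hypotheses (a)--(e) of Proposition \ref{probabilisticlemma}, and then pull the conclusion back to vectors via Lemma \ref{lemma-vector-measure} and the fact $\mu_{x^{\otimes n}} = \mu_x^{*n}$. You simply spell out the dictionary check that the paper leaves implicit.
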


It is important to notice that since $N_0(x)=d_x$ and $N_0(y)=d_y$, the conditions of the proposition can be satisfied only when $d_x>d_y$. This is the main reason why our approach fails to prove Conjecture \ref{conj-nielsen}.

\begin{proof} One checks that the probability measures $\mu_x$ and $\mu_y$ associated to the vectors $x$ and $y$ satisfy the hypotheses of Proposition \ref{probabilisticlemma}. Indeed, for $p \in \R$, one has
\[N_p(x) = \int \el d\mu_x, \ \ \textnormal{ with } \lambda = p-1.\]
As $\mu_x^{*n} = \mu_{x^{\otimes n}}$, there exists a integer $N$ such that for $n \geq N$, we have $\mu_{x^{\otimes n}} \stochprec \mu_{y^{\otimes n}}$. It remains to apply the Lemma \ref{lemma-vector-measure} in order to complete the proof.
\end{proof}
The main idea used in the following proofs is to slightly modify the vector $x$ so that the couple ($x$, $y$) satisfies the hypotheses of Proposition \ref{lemma-main}.

\begin{proof}[Proof of Theorem \ref{th1}]
Let $x,y \in P_d$ satisfying $N_p(x) \leq N_p(y)$ for all $p \geq 1$. Since $N_1(x)=N_1(y)=1$ and $-H = \frac{dN_p}{dp}|_{p=1}$, we also have $-H(x) \leq -H(y)$.
For $0 < \e < \frac{d}{d+1} x_{\min}$, define $x_\e \in P_{d+1}$ by
\[ x_\e = (x_1-\frac{\e}{d},\dots,x_d-\frac{\e}{d},\e) .\]
One checks that $x_\e \precneqq x$ and therefore $N_p(x_\e) < N_p(x) \leq N_p(y)$ for any $p >1$, and $-H(x_\e) < -H(x) \leq -H(y)$. Since $-H = \frac{dN_p}{dp}|_{p=1}$ and the function $p \mapsto N_p(\cdot)$ is continuous, this means that there exists some $0 < p_\e<1$ such that $N_p(x_\e) \geq N_p(y)$ for any $p \in [p_\e,1]$. 
Choose an integer $k \geq 2$, depending on $\e$, such that 
\[k > \max\{d^{1/(1-p_\e)}\e^{-p_\e/(1-p_\e)}, \frac{\e}{y_{\min}}, d\}\]
and define $x_{\e,k} \in P_{<\iy}$ as
\[ x_{\e,k} = (x_1-\frac{\e}{d},\dots,x_d-\frac{\e}{d},\underbrace{\frac{\e}{k},\dots,\frac{\e}{k}}_{k \textnormal{ times}}). \]
For any $0 \leq p\leq p_\e$ we have
\[ N_p(x_{\e,k}) \geq k \left(\frac{\e}{k}\right)^p > d \geq N_p(y), \]
and for any $p<0$ we have
\[ N_p(x_{\e,k}) \geq k \left(\frac{\e}{k}\right)^p > d y_{\min}^p \geq N_p(y). \]
We also have $x_{\e,k} \precneqq x_\e$ and therefore $N_p(x_{\e,k}) > N_p(x_\e) \geq N_p(y)$ for $p_\e \leq p<1$. Similarly, $N_p(x_{\e,k}) < N_p(x_\e) \leq N_p(y)$ for $p>1$. This means that $x_{\e,k}$ and $y$ satisfy the hypotheses of Proposition \ref{lemma-main}, and therefore $x_{\e,k} \in M_{< \iy}(y)$. Since $||x_{\e,k}-x||_1 \leq 2\e$ and $\e$ can be chosen arbitrarily small, this completes the proof of the theorem.
\end{proof}

\begin{proof}[Proof of Theorem \ref{th2}]
Let $x,y \in P_d$ satisfying $N_p(x) \leq N_p(y)$ for $p \geq 1$ and $N_p(x) \geq N_p(y)$ for $0 \leq p \leq 1$. As in the previous proof, we consider for $0 < \e < \frac{d}{d+1} x_{\min}$ the vector $x_\e$ defined as
\[ x_\e = (x_1-\frac{\e}{d},\dots,x_d-\frac{\e}{d},\e) .\]
We are going to show using Proposition \ref{lemma-main} that for $\e$ small enough, $x_\e$ is in $M_{d+1}(y)$. Note that $x_\e \precneqq x$, and therefore $N_p(x_\e) < N_p(x) \leq N_p(y)$ for $p > 1$, and $N_p(x_\e) > N_p(x) \geq N_p(y)$ for $0 < p <1$. Also, since $N_0(x_\e)=d+1$ and $N_0(y)=d$, there exists by continuity a number $p_0<0$ (not depending on $\e$) such that $N_p(y) < d+1$ for all $p \in [p_0,0]$. Thus for $p \in [p_0,0]$ we have 
\[ N_p(x_\e) \geq N_0(x_\e) = d+1 > N_p(y) .\]
It remains to notice that for $\e < d^{1/p_0}y_{\min}$, we have for any $p\leq p_0$
\[ N_p(x_\e) \geq \e^p > d y_{\min}^p \geq N_p(y) .\]
We checked that $x_\e$ and $y$ satisfy the hypotheses of Proposition \ref{lemma-main}, and therefore $x_\e \in M_{d+1}(y)$. Since  $||x_\e-y||_1 \leq 2\e$ and $\e$ can be chosen arbitrarily small, this completes the proof of the theorem.
\end{proof}

\section{Infinite dimensional catalysis}\label{sec:inf_cat}

In light of the recent paper \cite{owari}, we investigate the majorization relation and its generalizations for infinitely-supported probability vectors. Let us start by adapting the key tools used in the previous section to this non-finite setting. 

First, note that when defining the decreasing rearrangement $x^\downarrow$ of a vector $x$, we shall ask that only the \emph{non-zero} components of $x$ and $x^\downarrow$ should be the same up to permutation. The majorization relation $\prec$ extends trivially to $P_\iy$, the set of (possibly infinite) probability vectors. The same holds for the relations $\prec_M$ and $\prec_T$ (note however that for $\prec_T$, we allow now infinite-dimensional catalysts). 

Note that for a general probability vector, there is no reason that $N_p$ for $p \in (0, 1)$ or $H$ should be finite. He have thus to replace the hypothesis (C1) by the following one:

\begin{enumerate}
\item[(C1')] For $p \geq 1$, $N_p(x) \leq N_p(y)$ and $H(x) < \iy$.
\end{enumerate}

Notice however that the inequalities $N_p(x) \leq N_p(y)$ for $p \to 1^+$ imply that $H(y) \leq H(x) < \iy$ and thus both entropies are finite. 
\begin{theo}
\label{th3}
If $x,y \in P_\iy$ and satisfy (C1'), then, for all $\e >0$ there exist finitely supported vectors $x_\e, y_\e \in P_{<\iy}$ and $n \in \N$ such that $\|x-x_\e\|_1 \leq \e$, $\|y-y_\e\|_1 \leq \e$ and $x_\e^{\otimes n} \prec y_\e^{\otimes n}$.
\end{theo}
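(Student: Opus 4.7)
The plan is to reduce Theorem \ref{th3} to the finite-dimensional Theorem \ref{th1} by approximating $y$ from above and $x$ from below in the majorization order by finitely supported vectors, in such a way that hypothesis (C1) still holds for the approximants.

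First, after permuting we may assume that $y_1 \geq y_2 \geq \cdots$. For $N$ large I would put $y_\e = (y_1 + R_N, y_2, \ldots, y_N)$ with $R_N = \sum_{i > N} y_i$. Collapsing the tail into the largest coordinate is a merging operation, so $y_\e \succ y$ and in particular $N_p(y_\e) \geq N_p(y)$ for every $p \geq 1$; moreover $\|y - y_\e\|_1 = 2 R_N$, so $N$ can be chosen with $2 R_N \leq \e$.

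For the approximation of $x$, for $M$ large and $k$ much larger I would set
\[ \tilde x = (x_1, \ldots, x_M, \underbrace{r_M/k, \ldots, r_M/k}_{k \text{ times}}), \qquad r_M = \sum_{i > M} x_i, \]
which is finitely supported and satisfies $\|x - \tilde x\|_1 \leq 2 r_M$. The aim is to choose $k$ large enough that $N_p(\tilde x) \leq N_p(x)$ for all $p \geq 1$; this is equivalent to $r_M^p / k^{p-1} \leq \alpha_p := \sum_{i > M} x_i^p$ on $[1, +\iy)$, with equality at $p = 1$, and for $p > 1$ it reduces to $\log k \geq g(p)$ with $g(p) = (p \log r_M - \log \alpha_p)/(p-1)$. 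One checks that $g$ extends continuously to the compactified interval $[1, +\iy]$: at $p = 1^+$ a l'H\^opital computation gives $g(1^+) = \log r_M + r_M^{-1} \sum_{i > M} x_i (-\log x_i)$, finite precisely because $H(x) < +\iy$; at $p = +\iy$, writing $a := \max_{i > M} x_i$ and using $\alpha_p^{1/p} \to a$, one obtains $g(+\iy) = \log(r_M / a) < +\iy$. By continuity on a compact set, $\sup g < +\iy$, so any $k \geq \exp(\sup g)$ works.

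Finally, padding with zeros I would view $\tilde x$ and $y_\e$ as living in a common $P_d$; by construction, $N_p(\tilde x) \leq N_p(x) \leq N_p(y) \leq N_p(y_\e)$ for all $p \geq 1$, so $(\tilde x, y_\e)$ satisfies (C1). Theorem \ref{th1} then furnishes some $x_\e \in M_{<\iy}(y_\e)$ with $\|x_\e - \tilde x\|_1 \leq \e/2$; choosing $M$ also so that $2 r_M \leq \e/2$ yields $\|x - x_\e\|_1 \leq \e$ and $\|y - y_\e\|_1 \leq \e$, together with an integer $n$ for which $x_\e^{\otimes n} \prec y_\e^{\otimes n}$. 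The main obstacle is the uniform-in-$p$ control in the approximation of $x$: one has to bound $\sup_{p \geq 1} g(p)$, and it is precisely the finiteness of $g(1^+)$ which forces us to use the extra hypothesis $H(x) < +\iy$ from (C1') (the cases $p \to +\iy$ and bounded $p$ come essentially for free, but the behavior as $p \to 1^+$ requires the entropy bound). Once this is in place, everything else is a routine transfer of Theorem \ref{th1}.
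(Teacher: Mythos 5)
Your proposal follows the same route as the paper's own proof: truncate $y$ by collapsing its tail into the largest coordinate, truncate $x$ by spreading its tail mass over $k$ equal small coordinates, and bound the critical exponent (your $g(p)$, the paper's $\log\phi(p)$) uniformly on $[1,+\infty]$, using the finiteness of $H(x)$ precisely to control the limit as $p \to 1^+$; then invoke Theorem \ref{th1}. The only difference is cosmetic (working with $g = \log\phi$ and spelling out the l'H\^opital step that the paper merely asserts), so this is essentially the paper's argument.
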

\begin{proof} Fix $\e >0$ small enough. If $y$ has infinite support, consider the truncated vector $y_\e = (y_1 + R(\e), y_2, \ldots, y_{N(\e)})$, where $N(\e)$ and $R(\e)$ are such that $R(\e) = \sum_{i=N(\e)+1}^{\iy}{y_i} \leq \e$; otherwise put $y_\e = y$. Clearly, we have $\|y - y_\e\|_1 \leq 2\e$ and $N_p(y_\e) \geq N_p(y)$ for all $p>1$.
If the vector $x$ is finite, use Theorem \ref{th1} with $x_\e = x$ and $y_\e$ to conclude. Otherwise, consider  $M(\e)$ such that $S(\e) = \sum_{i=M(\e)+1}^{\iy}{x_i} \leq \e $ and define the vector 
\[ x_\e = (x_1, x_2, \ldots, x_{M(\e)}, \underbrace{\frac{S(\e)}{k}, \frac{S(\e)}{k}, \dots, \frac{S(\e)}{k}}_{k \textnormal{ times}}), \] 
where $k$ is a constant depending on $\e$ which will be chosen later. For all $k \geq 1$, $x_\e$ is a finite vector of size $M(\e) + k$ and we have $\|x-x_\e\|_1 \leq 2\e$. Let us now show that we can chose $k$ such that $N_p(x_\e) \leq N_p(x)$ for all $p \geq 1$. In order to do this, consider the function $\phi:(1, \iy) \rightarrow \R_+$ 
\[\phi(p) = \left[ \frac{S(\e)^p}{\sum_{i=M(\e)+1}^{\iy}{x_i^p}} \right]^\frac{1}{p-1}. \]  
The function $\phi$ takes finite values on $(1, \iy)$ and $\lim_{p \to \iy}{\phi(p) = \frac{S(\e)}{x_{M(\e)+1}}} < \iy$. Moreover, as the Shannon entropy of $x$ is finite, one can also show that $\lim_{p \to 1^+}{\phi(p)} < \iy$. Thus, the function $\phi$ is bounded and we can choose $k \in \N$ such that $k \geq \phi(p)$ for all $p \geq 1$. This implies that
\[N_p(x_\e) - N_p(x) = k \left( \frac{S(\e)}{k} \right)^p - \sum_{i=M(\e)+1}^{\iy}{x_i^p} \leq 0.\]
In conclusion, we have found two finitely supported vectors $x_\e$ and $y_\e$ such that $\|x - x_\e\|_1 \leq 2\e$, $\|y - y_\e\|_1 \leq 2\e$ and $N_p(x_\e) \leq N_p(y_\e)$ for all $p \geq 1$. To conclude, it suffices to apply Theorem \ref{th1} to $x_\e$ and $y_\e$.
\end{proof}

\bigskip

Address : \\
Université de Lyon, \\
Université Lyon 1, \\
CNRS, UMR 5208 Institut Camille Jordan, \\
Batiment du Doyen Jean Braconnier, \\
43, boulevard du 11 novembre 1918, \\
F - 69622 Villeurbanne Cedex, \\
France\\
\\
Email: aubrun@math.univ-lyon1.fr, nechita@math.univ-lyon1.fr


\begin{thebibliography}{10}

\bibitem{an}
G. Aubrun and I. Nechita, {\it Catalytic majorization and $\ell_p$ norms}, Comm. Math. Phys. {\bf 278} (2008), no. 1, 133--144.

\bibitem{bandy}
S. Bandyopadhyay, V. Roychowdhury and U. Sen, {\it Classification of nonasymptotic bipartite pure-state entanglement transformations}, Phys. Rev. A, 65 (2002), 052315.

\bibitem{bhatia}
R. Bhatia, {\it Matrix Analysis}. Graduate Texts in Mathematics, 169. Springer-Verlag, New York, 1997.

\bibitem{dk}
S. K. Daftuar and M. Klimesh, {\it Mathematical structure of entanglement catalysis}, Phys. Rev. A (3) {\bf 64} (2001), no. 4, 042314.

\bibitem{dz}
A. Dembo and O. Zeitouni, {\it Large deviations Techniques and Applications}, Second edition. Applications of Mathematics (New York), 38. Springer-Verlag, New York, 1998.

\bibitem{djfy}
R. Duan, Z. Ji, Y. Feng, X. Li and M. Ying, {\it Some issues in quantum information theory}, J. Comput. Sci. \& Technol. {\bf 21} (2006), no. 5, 776--789.

\bibitem{feller}
W. Feller, {\it An introduction to probability theory and its applications. Vol. II}. John Wiley \& Sons, 1966.

\bibitem{fdy}
Y. Feng, R. Duan and M. Ying, {\it Relation Between Catalyst-assisted Entanglement Transformation and Multiple-copy Transformation}. Phys. Rev. A (3) {\bf 74} (2006), 042312.

\bibitem{gs}
G. Grimmett and D. Stirzaker, {\it Probability and random processes}. Third edition. Oxford University Press, New York, 2001.

\bibitem{jp}
D. Jonathan and M. B. Plenio, {\it Entanglement-assisted local manipulation of pure quantum states}, Phys. Rev. Lett. {\bf 83}  (1999),  no. 17, 3566--3569.

\bibitem{kuperberg}
G. Kuperberg, {\it The capacity of hybrid quantum memory}, IEEE Trans. Inform. Theory {\bf 49} (2003), 1465--1473.

\bibitem{mo}
A. Marshall and I. Olkin, {\it Inequalities: theory of majorization and its applications}, Mathematics in Science and Engineering, 143. Academic Press Inc., New York-London, 1979.

\bibitem{nielsen_art}
M. Nielsen, {\it Conditions for a class of entanglement transformations}, Phys. Rev. Lett. {\bf 83}, 436 (1999).

\bibitem{nielsen_book}
M. Nielsen, {\it An introduction to majorization and its applications to quantum mechanics}, preprint, available at
\url{www.qinfo.org/talks/2002/maj/book.ps}

\bibitem{owari}
M. Owari et al., {\it $\epsilon$-convertibility of entangled states and extension of Schmidt rank in infinite-dimensional systems}, preprint, available at \url{quant-ph/0609167v3}.

\bibitem{ps}
G. P\'olya and G. Szeg\"o, {\it Problems and Theorems in Analysis}, Springer-Verlag, Berlin-New York, 1978.

\bibitem{stoyan}
D. Stoyan, {\it Comparison Metrods for Queues and Other Stochastic Models}, Wiley Series in Probability and Mathematical Statistics: Applied Probability and Statistics. John Wiley \& Sons, Ltd., Chichester, 1983.

\bibitem{t1}
S. Turgut, {\it Catalytic Conversion Probabilities for Bipartite Pure States}, preprint, available at \url{arXiv:0706.3654}.

\bibitem{t2}
S. Turgut, {\it Necessary and Sufficient Conditions for the Trumping Relation}, preprint, available at \url{arXiv:0707.0444}.

\end{thebibliography}
\end{document}